\definecolor{darkgreen}{rgb}{0,0.5,0}
\definecolor{darkred}{rgb}{0.65,0,0}
\newtheorem{theorem}{Theorem}[section]
\newtheorem{lemma}[theorem]{Lemma}
\newtheorem{definition}{Definition}[section]
\newcommand{\calA}{\ensuremath{\mathcal{A}}}
\newcommand{\calC}{\ensuremath{\mathcal{C}}}
\newcommand{\CONGEST}{\ensuremath{\mathsf{CONGEST}}\xspace}
\newcommand{\DCLOCAL}{\ensuremath{\mathsf{DC}\text{-}\LOCAL}}
\newcommand{\LOCAL}{\ensuremath{\mathsf{LOCAL}}\xspace}
\newcommand{\eps}{\varepsilon}
\renewcommand{\epsilon}{\varepsilon}
\newcommand{\poly}{\operatorname{\text{{\rmfamily poly}}}}
\newcommand{\set}[1]{\left\{#1\right\}}
\newcommand{\hide}[1]{}
\renewcommand{\phi}{\varphi}
\begin{document}

\title{Distributed $(\Delta+1)$-Coloring in Graphs \\ of Bounded Neighborhood Independence\footnote{This work was partially supported by the German Research Foundation (DFG) under the project number 491819048.}}

\date{}

\author{
   Marc Fuchs \\
   \small{University of Freiburg} \\
   \small{marc.fuchs@cs.uni-freiburg.de}
   \and
   Fabian Kuhn \\
   \small{University of Freiburg} \\
   \small{kuhn@cs.uni-freiburg.de}
}

\maketitle

\begin{abstract}
  The distributed coloring problem is arguably one of the key problems studied in the area of distributed graph algorithms. The most standard variant of the problem asks for a proper vertex coloring of a graph with $\Delta+1$ colors, where $\Delta$ is the maximum degree of the graph. Despite an immense amount of work on distributed coloring problems in the distributed setting, determining the deterministic complexity of $(\Delta+1)$-coloring in the standard message passing model remains one of the most important open questions of the area. In the \LOCAL model, it is known that $(\Delta+1)$-coloring requires $\Omega(\log^* n)$ rounds even in paths and rings (i.e., when $\Delta=2$). For general graphs, the problem is known to be solvable in $\tilde{O}\big(\log^{5/3}n\big)$ rounds and in $O(\sqrt{\Delta\log\Delta} + \log^* n)$ rounds when expressing the complexity as a function of $\Delta$ and with an optimal dependency on $n$. In the present paper, we aim to improve our understanding of the deterministic complexity of $(\Delta+1)$-coloring as a function of $\Delta$ in a special family of graphs for which significantly faster algorithms are already known.

  The neighborhood independence $\theta$ of a graph is the maximum number of pairwise non-adjacent neighbors of some node of the graph. Notable examples of graphs of bounded neighborhood independence are line graphs of graphs and bounded-rank hypergraphs. It is known that the $(2\Delta-1)$-edge coloring problem and therefore the $(\Delta+1)$-coloring problem in line graphs of graphs can be solved in $O\big(\log^{12}\Delta+\log^* n\big)$ rounds. In general, in graphs of neighborhood independence $\theta=O(1)$, it is known that $(\Delta+1)$-coloring can be solved in $2^{O(\sqrt{\log\Delta})}+O(\log^* n)$ rounds. In the present paper, we significantly improve the latter result, and we show that in graphs of neighborhood independence $\theta$, a $(\Delta+1)$-coloring can be computed in $(\theta\cdot\log\Delta)^{O(\log\log\Delta / \log\log\log\Delta)}+O(\log^* n)$ rounds and thus in quasipolylogarithmic time in $\Delta$ as long as $\theta$ is at most polylogarithmic in $\Delta$. Our algorithm can be seen as a generalization of an existing similar, but slightly weaker result for $(2\Delta-1)$-edge coloring. We also show that the approach that leads to this polylogarithmic in $\Delta$ algorithm for $(2\Delta-1)$-edge coloring already fails for edge colorings of hypergraphs of rank at least $3$. At the core of the fast edge coloring algorithm is an algorithm to divide the edges of a graph into two parts so that up to a multiplicative error of $1+o(1)$, the maximum degree of the line graph induced by each part is at most half the maximum degree of the original line graph. We show that computing such a bipartition of the edges of the line graph of a hypergraph of rank at least $3$ requires time logarithmic in $n$.
\end{abstract}

\newpage

\section{Introduction and Related Work}
\label{sec:intro}

The distributed coloring problem asks for a proper vertex coloring of some graph that is provided in a distributed way. In the most standard form, the task is to color the nodes of some $n$-node graph $G=(V,E)$, which also defines the network topology. The nodes of $G$ have unique $O(\log n)$-bit identifiers and they can exchange messages over the edges of $G$ in synchronous rounds. If the messages can be of arbitrary size, this setting is known as the \LOCAL model~\cite{Linial1987} and if the message size is restricted to $O(\log n)$ bits, it is known as the \CONGEST model~\cite{Peleg2000}.\footnote{Throughout this paper, we focus on the \LOCAL model. We however remark that with relatively small additional work, the algorithm that we present in the paper can also be implemented in the \CONGEST model.} The number of allowed colors is usually some function of the maximum degree $\Delta$. Of particular interest is the problem of coloring with $\Delta+1$ colors, i.e., the number of colors used by a sequential greedy algorithm.

The distributed coloring problem is possibly the most intensively studied problem in the area of distributed graph algorithms. The most standard version of the problem asks for a vertex coloring of the network graph $G$ with $\Delta+1$ colors, where $\Delta$ denotes the maximum degree of $G$. Despite the vast amount of previous work on the problem and despite astonishing progress over the last years, the gap between the best known lower and upper bounds on the complexity of $(\Delta+1)$-coloring in \LOCAL and \CONGEST is still very large. The only known lower bound is an $\Omega(\log^* n)$ lower bound (in \LOCAL and \CONGEST), which holds in paths and cycles (i.e., even if $\Delta=2$) and was proven for deterministic algorithms by Linial~\cite{Linial1987} and for randomized algorithms by Naor~\cite{Naor1991}. There is a long series of work that tried to optimize the complexity of the problem as a function of $n$~\cite{Awerbuch89,bamberger2020coloring,Barenboim2013,BarenboimE11,Barenboim2016,ChangLP18,GhaffariGrunau23,GhaffariGrunauFOCS24,GGR2020,GhaffariKuhn21,HalldorssonKMT21,HarrisSS18,johansson99,Kuhn20,Linial1987,Luby1986,panconesi96decomposition,Rozhon2020,SchneiderW10symmetry}. The best known upper bounds in \LOCAL are $\tilde{O}(\log^{5/3} n)$ for deterministic~\cite{GhaffariGrunauFOCS24} and $\tilde{O}(\log^{5/3}\log n)$ for randomized algorithms~\cite{ChangLP18,GhaffariGrunauFOCS24}.\footnote{The notation $\tilde{O}(\cdot)$ hides polylogarithmic factors in the argument, i.e., $\tilde{O}(x)=O(x\cdot\poly\log x)$.} In the \CONGEST model, the best known algorithms have a round complexity of $O(\log^2\Delta\cdot\log n)$ for deterministic algorithms~\cite{GhaffariKuhn21} and $O(\log^3\log n)$ for randomized algorithms~\cite{HalldorssonKMT21}. Apart from work that tries to optimize the complexity primarily as a function of $n$, there is also a long line of research that tries to optimize the complexity of $(\Delta+1)$-coloring problems as a function of $\Delta$, while keeping the dependency on $n$ of order $O(\log^* n)$ and thus asymptotically optimal~\cite{barenboim16sublinear,barenboim11,Barenboim2013,BarenboimEG18,barenboim14distributed,fraigniaud16local,FK23,goldberg88,Kuhn20,KuhnW06,Kuhn2009,MausT20,SzegedyV93}. In this case, it is known that randomization does not help~\cite{chang16exponential} and people have therefore only studied deterministic algorithms. The fastest such $(\Delta+1)$-coloring algorithm in the \LOCAL model requires $O(\sqrt{\Delta\log\Delta} + \log^* n)$ rounds~\cite{fraigniaud16local,BarenboimEG18,MausT20}. The best \CONGEST algorithm is slightly slower and uses $O(\sqrt{\Delta}\cdot\log^2\Delta\cdot\log^6\log\Delta + \log^* n)$ rounds~\cite{FK23}.

Determining the true \textbf{complexity of distributed \boldmath$(\Delta+1)$-coloring} and thus closing the large gaps between the above upper bounds and the $\Omega(\log^* n)$ lower bound of \cite{Linial1987,Naor1991} is arguably \textbf{one of the most important open problems} in the area of distributed graph algorithms. Note that for the closely related problem of computing a maximal independent set (MIS), the situation is very different. The best known deterministic upper bounds for computing an MIS are  $O(\Delta+\log^* n)$ (in \LOCAL and \CONGEST)~\cite{barenboim14distributed}, $\tilde{O}(\log^{5/3} n)$ in \LOCAL~\cite{GhaffariGrunauFOCS24}, and $O(\log^2\Delta\cdot \log n)$ in \CONGEST~\cite{Localrounding23} and the best randomized upper bounds are $\tilde{O}(\log\Delta + \log^{5/3}\log n)$ in \LOCAL~\cite{ghaffari16improved,GhaffariGrunauFOCS24}, and $O(\log\Delta + \log^3\log n)$ in \CONGEST~\cite{ghaffari16improved,Localrounding23}. All of those upper bounds are at least as large as the corresponding upper bounds for $(\Delta+1)$-coloring.\footnote{This is no coincidence as there is a general locality-preserving reduction from the distributed $(\Delta+1)$-coloring problem to the distributed MIS problem~\cite{Luby1986,Linial1987}.} However, for the problem of computing an MIS, there are also strong lower bounds known, which show that the known upper bounds are either optimal or relatively close to being optimal. In \cite{KMW04}, it was shown that even with randomization, computing an MIS requires at least $\Omega\Big(\min\set{\frac{\log\Delta}{\log\log\Delta}, \log_\Delta n}\Big)$ rounds (implying an $\Omega\Big(\sqrt{\frac{\log n}{\log\log n}}\Big)$ lower bound as a function of $n$). Further, in \cite{Balliu2019}, it was proven that computing an MIS deterministically requires $\Omega\big(\min\set{\Delta, \log_\Delta n}\big)$, leading to an $\Omega\big(\frac{\log n}{\log\log n}\big)$ lower bound as a function of $n$. The lower bound of \cite{Balliu2019} in particular implies that when restricting the $n$-dependency to $O(\log^* n)$, the $O(\Delta+\log^* n)$ bound of \cite{barenboim14distributed} is optimal. In \cite{hideandseek,Balliu0KO22} it was further shown that all the MIS lower bounds of \cite{KMW04,Balliu2019} even hold if the network graph is a tree.

When trying to improve our understanding of the distributed $(\Delta+1)$-coloring problem, we believe that the $\Delta$-dependency of the problem complexity is currently the most interesting case. First note that we can concentrate on deterministic algorithms as it is well understood that the current randomized upper bounds can only be improved by also improving the deterministic upper bounds~\cite{chang16exponential,ChangLP18}. Further, one can use techniques similar to the ones used in \cite{chang16exponential} to show that any deterministic \LOCAL algorithm with round complexity $f(n)=o(\log n / \log\log n)$ would directly imply a deterministic \LOCAL algorithm with round complexity $\Delta^{o(1)}\cdot \log^* n$.\footnote{To see this, take any value $t$ such that $f\big(\Delta^{2t}\big)<t/2$. In $t\cdot \log^* n$ rounds, one can color $G^t$ with $\Delta^{2t}$ colors by using an algorithm of Linial~\cite{Linial1987}. As $f\big(\Delta^{2t}\big)<t/2$, one can then use those colors as unique IDs to run the given $f(n)$-round algorithm in $f\big(\Delta^{2t}\big)$ rounds. If $f(n)<\eps\cdot\log n /\log\log n$, we can satisfy $f\big(\Delta^{2t}\big)<t/2$ for some $t\leq \Delta^{O(\eps)}$.} Hence, unless we can improve the existing $O(\sqrt{\Delta\log\Delta} + \log^* n)$ round algorithm of \cite{fraigniaud16local,BarenboimEG18,MausT20} very significantly, there cannot be any deterministic $(\Delta+1)$-coloring algorithm with a complexity $o(\log n / \log\log n)$, which would mean that the existing $\tilde{O}(\log^{5/3} n)$-round algorithm of \cite{GhaffariGrunauFOCS24} is already quite close to optimal.

In the present paper, we make progress on the $\Delta$-dependency of $(\Delta+1)$-coloring for a special family of graphs for which faster $(\Delta+1)$-coloring algorithms are already known: graphs of \emph{bounded neighborhood independence}. The neighborhood independence $\theta$ of a graph $G=(V,E)$ is the maximum independent set size of any one-hop neighborhood of $G$, that is, the maximum independence number of any of the induced subgraphs $G[N(v)]$ for $v\in V$. Equivalently, graphs of neighborhood independence $\theta$ are the graphs that do not have $K_{1,\theta+1}$ as an induced subgraph and the family of graphs with neighborhood independence $\theta=2$ is sometimes also known as \emph{claw-free graphs}~\cite{FaudreeFlandrinRyjacek1997}. The family of bounded neighborhood independence graphs is the family of graphs for which $\theta=O(1)$. Important examples of graphs of bounded neighborhood independence are the line graphs of hypergraphs of bounded rank. Because in a line graph of rank $r$, the neighboring edges of each hyperedge can be partitioned into $r$ cliques (in the line graph), the neighborhood independence of the line graph can be at most $r$. Hence, the problem of $(2\Delta-1)$-edge coloring is a special case of $(\Delta+1)$-coloring graphs of bounded neighborhood independence. More generally, the problem of $(r\Delta-(r-1))$-edge coloring hypergraphs of rank $r$ is a special case of $(\Delta+1)$-coloring graphs of neighborhood independence $\theta=r$. It was proven in \cite{Kuhn20} that $(\Delta+1)$-coloring graph of neighborhood independence $\theta$ can be done in time $2^{O(\sqrt{\log\theta\cdot \log\Delta})}+O(\log^* n)$. The special case of $(2\Delta-1)$-edge coloring (of graphs) was later solved in time $(\log\Delta)^{O(\log\log\Delta)}+O(\log^* n)$ in \cite{BalliuKO20} and in time $O(\log^{12}\Delta + \log^* n)$ in \cite{BalliuBKO22}. The algorithms of \cite{BalliuKO20,BalliuBKO22} were specifically designed for the $(2\Delta-1)$-edge coloring problem and they cannot directly be applied to even coloring the edges of hypergraphs of rank at least $3$. As our main result, we provide an algorithm that parameterizes the complexity of $(\Delta+1)$-coloring as a function of the neighborhood independence $\theta$ and the maximum degree $\Delta$ and that for $\theta=O(1)$ is slightly faster than the $(2\Delta-1)$-edge coloring algorithm  of \cite{BalliuKO20}.

\begin{restatable}{theorem}{restateMain}\label{thm:mainNeighborhood}
  There is a deterministic \LOCAL algorithm with round complexity
  \begin{align*}
      \min \left\{O\big(\theta^2 \cdot \log^3 \Delta\big)^{\frac{3\log\log \Delta}{\log \log \log \Delta } + 1}, O( \theta^2 \cdot \Delta^{\frac{1}{4}} \cdot \log^{7} \Delta)  \right\} + O(\log^* n).
  \end{align*}
  to solve $(\Delta+1)$-coloring in any $n$-node graph $G$ of maximum degree $\Delta$ and neighborhood independence $\theta$. 
\end{restatable}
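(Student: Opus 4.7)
The plan is to extend the recursive-splitting framework developed for $(2\Delta-1)$-edge coloring in \cite{BalliuKO20} to the list coloring setting on graphs of neighborhood independence $\theta$. At the top level I would cast the input $(\Delta+1)$-coloring problem as a list coloring instance with palettes of size $\Delta+1$, and then partition the palette of every still-uncolored vertex together with the vertex set itself into sub-instances whose effective maximum degrees are strictly smaller than $\Delta$; each sub-instance is then solved recursively in parallel, since distinct sub-instances use disjoint colors. At the leaves of the recursion, the effective maximum degree is small enough that either a Linial-style procedure on top of the existing partial coloring or the $O(\sqrt{\Delta\log\Delta}+\log^* n)$-round algorithm of~\cite{fraigniaud16local, BarenboimEG18, MausT20} finishes the coloring.

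The technical heart would be a one-step splitting primitive that, given a list coloring instance on a graph of neighborhood independence $\theta$ and current maximum degree $\Delta$, returns sub-instances with lists of size roughly $\Delta/k+1$ and maximum degrees roughly $\Delta/k$ in $O(\theta^2\cdot\log^3\Delta)$ rounds. I would implement it by producing, via a defective-coloring subroutine tailored to bounded neighborhood independence, a near-balanced assignment of each vertex's neighbors to sub-palettes. The neighborhood-independence assumption enters here because any $\theta+1$ pairwise non-adjacent neighbors are forbidden, so each neighborhood looks approximately like a union of $\theta$ cliques plus a low-degree residue, which is precisely what allows a balanced split with only $\theta^2$ overhead from tracking pairs of sub-palettes, and a $\log^3\Delta$ overhead from derandomizing / slack-balancing the partition.

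To obtain the first of the two claimed upper bounds, I would chain the splitting primitive recursively for $d=\Theta(\log\log\Delta/\log\log\log\Delta)$ levels, choosing the branching parameter at each level so that $\log\Delta$ shrinks by a factor of roughly $\log\log\Delta$ per level; multiplying the per-level cost $O(\theta^2\log^3\Delta)$ across the $d+1$ levels then gives exactly $\bigl(O(\theta^2\log^3\Delta)\bigr)^{3\log\log\Delta/\log\log\log\Delta+1}$. To obtain the second bound, I would instead cap the recursion at constant depth so that after a handful of splits the effective degree is $\Theta(\Delta^{1/4})$, and then finish each remaining sub-instance using the $O(\sqrt{\Delta\log\Delta})$-round algorithm of~\cite{fraigniaud16local, BarenboimEG18, MausT20}; the $\theta^2$ prefactor and the $\log^7\Delta$ factor absorb the cost of the few splitting rounds plus the polylogarithmic overhead incurred when translating between list coloring and $(\Delta+1)$-coloring. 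The $O(\log^* n)$ additive term arises only once, from an initial Linial-style $O(\Delta^2)$-coloring used to break symmetry for the deterministic splitting routine.

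The step I expect to be the hardest is designing the splitting primitive so that its complexity depends on $\theta$ only polynomially. For line graphs, where $\theta=2$, each edge's neighborhood decomposes into two cliques and a balanced pseudorandom partition is easy to obtain and to derandomize~\cite{BalliuKO20, BalliuBKO22}. For general $\theta$, however, the neighborhood is only guaranteed to have no independent set of size $\theta+1$, and need not admit a clique cover of comparable size. Turning this purely Ramsey-type guarantee into a constructive, deterministically derandomizable partition while keeping the overhead at $\theta^2$, rather than at $2^{O(\theta)}$ or $\theta^{O(1)}$ with a larger exponent, is where I expect the bulk of the work to lie and is the step most likely to require genuinely new ideas beyond a straightforward adaptation of the edge-coloring argument.
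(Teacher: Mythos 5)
Your high-level recursive architecture matches the paper's: cast the problem as list coloring, repeatedly split the palette (and assign vertices to sub-palettes via a defective-coloring step costing $O(\theta^2\log^3\Delta)$), recurse in parallel on disjoint color spaces with depth $\Theta(\log\log\Delta/\log\log\log\Delta)$ for the first bound, and truncate the recursion and invoke a $\sqrt{\cdot}$-type base-case algorithm for the second bound. But the proposal has a genuine gap exactly where you suspect it does: the splitting primitive itself is not constructed, and the mechanism you sketch for it would fail. A neighborhood with no independent set of size $\theta+1$ need not decompose into ``$\theta$ cliques plus a low-degree residue'' (that would require a Ramsey-type clique cover that does not exist constructively with polynomial overhead), so one cannot base the balanced palette assignment on such a decomposition. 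The paper instead goes through \emph{arbdefective} colorings: an arbdefective coloring with outdegree bound $d$ can be computed efficiently in \emph{any} graph, and in a graph of neighborhood independence $\theta$ the monochromatic neighbors of a vertex induce a subgraph of degeneracy at most $2d$, hence contain at most $(2d+1)\theta$ vertices --- this is the only place $\theta$ enters, and it is a counting argument on an oriented subgraph, not a clique cover.

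The second missing piece is that the splitting step must be a \emph{list} defective coloring (different vertices want different sub-palettes with different tolerable defects per sub-palette), and the arbdefective-to-defective translation above breaks when neighbors use wildly different defects for the same color: a neighbor with a much larger allowed outdegree can contribute many more monochromatic vertices than the target vertex budgeted for. The paper's main technical contribution (its Theorem~\ref{thm:ArbToDef}) resolves this by rounding all defects to powers of two, iterating over the resulting $O(\log\Delta)$ defect classes from largest to smallest so that every same-colored in-neighbor of $v$ has defect at most $d'_v(x)$, and interleaving this with a defective $O(\log^2\Delta)$-coloring of the graph so that each vertex has sufficient slack in at least one of the $O(\log^3\Delta)$ resulting iterations; this is where the $\log^3\Delta$ factor actually comes from, not from derandomization. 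Likewise the $\theta^2$ factor arises from slack generation (boosting slack to $\Theta(\theta)$ costs $O(\theta^2)$ sequential sub-instances), not from ``tracking pairs of sub-palettes.'' Without this reduction from list defective to list arbdefective coloring, the recursion you describe cannot be closed, so the proposal as written does not constitute a proof.
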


In the above theorem (and throughout the paper), all logarithms are to base $2$. A direct implication of \Cref{thm:mainNeighborhood} is that graphs with bounded neighborhood independence $\theta = O(1)$ can be properly colored with $\Delta+1$ colors in $\log^{O(\log \log \Delta / \log \log \log \Delta)} \Delta + O(\log^* n)$ communication rounds. This for instance implies that a $(r(\Delta-1)+1)$-edge coloring of hypergraphs of rank $r=O(1)$ can be computed in time $\log^{O(\log \log \Delta / \log \log \log \Delta)} \Delta + O(\log^* n)$.

Another implication is that we get a faster $(\Delta+1)$-coloring algorithms and more generally faster $(\mathit{degree}+1)$-list coloring algorithms in \LOCAL even if the neighborhood independence is moderately large. Specifically if $\theta = \tilde{o}(\Delta^{1/8})$, we obtain an algorithm with round complexity $o(\sqrt{\Delta}) + O(\log^*n)$, which is faster than the currently best such algorithm for general graphs, which has a round complexity of $O(\sqrt{\Delta\log\Delta}+\log^* n)$~\cite{fraigniaud16local,BarenboimEG18,MausT20}.

Small neighborhood independence might seem to be a quite restrictive graph parameter. We however emphasize that it generally does not make local graph problems easy. The MIS lower bounds of \cite{KMW04,Balliu2019} are actually proven for the simpler maximal matching problem and thus for the MIS problem on graphs of neighborhood independence $2$. We further remark that while our algorithm is slower than the $O(\log^{12}\Delta+\log^* n)$-round algorithm of \cite{BalliuBKO22} for the $(2\Delta-1)$-edge coloring problem (on graphs), the generalization to general bounded neighborhood independence and even to just edge-coloring bounded-rank hypergraphs seems significant. At the core of the edge coloring algorithm of \cite{BalliuBKO22} is an efficient defective $2$-edge-coloring algorithm with very tight parameters. Specifically, given an arbitrary graph $G=(V,E)$, let the edge degree $\deg(e)$ of an edge $e=\set{u,v}\in E$ be $\deg(u)+\deg(v)-2$. In \cite{BalliuBKO22}, it is in particular shown that one can efficiently color the edges of a graph with two colors such that for each edge $e\in E$, at most $(1/2+o(1))\cdot\deg(e)$ of its neighboring edges have the same color as $e$. This algorithm exploits the specific properties of the line graphs of graphs, and we do not see a way to generalize the algorithm even to hypergraphs of rank $3$. We formally show that as long as the algorithm is based on defective edge coloring with $2$ colors to recursively decompose the graph, the strategy cannot be extended to $r$-uniform hypergraphs for any $r\geq 3$. More specifically, we prove the following theorem by adapting an existing similar lowerbound for defective $2$-vertex-coloring of regular trees from \cite{BalliuHLOS19}.

\begin{restatable}{theorem}{restateLowerBound}\label{thm:lowerbound}
 For all integers $d,r\geq 3$, $d,r\in O(1)$, in $r$-uniform hypergraphs of maximum degree $d$, any deterministic \LOCAL algorithm needs at least $\Omega(\log n)$ rounds to compute a $(r(d-1)-d)$-defective $2$-edge-coloring.
\end{restatable}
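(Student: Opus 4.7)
The plan is to adapt the round-elimination based $\Omega(\log n)$ lower bound of~\cite{BalliuHLOS19} for defective $2$-vertex-coloring on regular trees to our hypergraph edge-coloring setting. For each $n$, I would first construct an $r$-uniform, $d$-regular hypergraph $H_n$ on $\Theta(n)$ vertices whose bipartite vertex-hyperedge incidence graph has girth $\Omega(\log n)$; such families exist via standard probabilistic constructions of random $(d,r)$-biregular bipartite graphs, or via explicit biregular expanders. For any $T \le c\log n$ with a suitably small constant $c=c(r,d)$, the $T$-neighborhood of every hyperedge of $H_n$ is then an alternating bipartite tree with degree-$d$ vertex-nodes and degree-$r$ hyperedge-nodes, so any deterministic $T$-round \LOCAL algorithm on $H_n$ operates on purely tree-like views.

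I would then reformulate $(r(d-1)-d)$-defective $2$-edge-coloring as a $1$-hop locally checkable labeling $\Pi$ on this bipartite incidence tree: each hyperedge-node carries a color in $\{0,1\}$, each vertex-node carries (as an auxiliary output) the multiset of colors of its $d$ incident hyperedges, and the constraint at each hyperedge-node $e$ with color $c_e$ requires that the total count, summed over the $r$ vertex-nodes incident to $e$, of other incident hyperedges colored $1-c_e$ is at least $d$. Applying the bipartite round elimination framework of~\cite{BalliuHLOS19} to $\Pi$ produces a sequence $\Pi = \Pi_0, \Pi_1, \Pi_2, \dots$ of LCLs with the property that a deterministic $T$-round algorithm for $\Pi_i$ yields a deterministic $(T-1)$-round algorithm for $\Pi_{i+1}$. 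Mirroring the non-trivialization analysis in~\cite{BalliuHLOS19} for defective $2$-coloring on $\Delta$-regular trees, I would track how the label alphabets and allowed configurations evolve through the iterations and certify that $\Pi_i$ remains non-trivial for every $i = o(\log n)$; combined with the tree-like views above this yields the desired $\Omega(\log n)$ bound.

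The main technical obstacle is that the round elimination in~\cite{BalliuHLOS19} is executed on a single-type tree of fixed degree, whereas our LCL naturally lives on a bipartite tree with two node types of degrees $d$ and $r$, and the line graph of $H_n$ contains $K_d$ cliques arising from shared hypergraph vertices. These cliques have no analogue in the single-type tree setting of~\cite{BalliuHLOS19} and are encoded here via the vertex-node constraint of $\Pi$. Carrying out the non-trivialization analysis while correctly handling both the within-clique constraints at each vertex-node and the $r$-way aggregation at each hyperedge-node, and while preserving the exact defect threshold $r(d-1)-d$, is where the bulk of the technical work lies; if the tight threshold turns out to be hard to preserve directly, I would first prove a slightly weaker $\Omega(\log n)$ bound for a larger defect and then tighten the parameters by refining the LCL.
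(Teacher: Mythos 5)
Your route is genuinely different from the paper's, and it has a gap at its core. The paper does not run round elimination at all: it reduces from the sinkless orientation problem on $2$-colored regular bipartite graphs (which has a known deterministic $\Omega(\log n)$ lower bound), using the LCL gap theorem of \cite{chang16exponential} to assume, for contradiction, a constant-round algorithm that works given a distance-$k$ coloring, and then a short structural lemma about rooted $(r,d)$-biregular trees: a degree-$r$ node has $r(d-1)$ hyperedge-nodes at distance two but only $d-1$ of them lie towards the root, so a defect bound of $r(d-1)-d$ forces at least one oppositely-colored hyperedge-node among the grandchildren, and the root's color therefore reappears at every fourth level. Gluing such trees onto the edges of the bipartite instance and forcing opposite root colors on the two sides yields an edge $2$-coloring in which every node sees both colors, i.e., a sinkless orientation. (Incidentally, this reduction is also how \cite{BalliuHLOS19} proves the $(d-2)$-defective $2$-coloring bound on regular trees; it is not a round-elimination argument as you describe it.)

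The gap in your proposal is that the entire technical content of a round-elimination lower bound --- exhibiting a fixed point, or a family of problems closed under the round elimination operator that contains $\Pi_0$ and contains no $0$-round-solvable problem --- is replaced by the statement that you would track how the alphabets and configurations evolve and certify that each $\Pi_i$ remains non-trivial. For defective-coloring-type LCLs the round-eliminated sequence does not visibly stabilize; the alphabet generally blows up at each step unless one finds the right relaxation of $\Pi_i$ before eliminating the next round, and identifying that relaxation is exactly where such proofs succeed or fail. Your setup (high-girth biregular incidence graphs, auxiliary vertex-node outputs to make the problem $1$-hop checkable, and the translation of defect $r(d-1)-d$ into ``at least $d$ opposite-colored neighboring hyperedges'') is sound, but as written the proposal is a plan for a proof rather than a proof. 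If you want to keep the round-elimination route you must actually produce the invariant problem family and verify it is preserved for $\omega(1)$ steps; otherwise, the reduction to sinkless orientation is the much shorter path and is what the paper carries out.
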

Note that an edge coloring is called $\delta$-defective if every edge has at most $\delta$ neighboring edges of the same color. Also note that the maximum edge degree of a $r$-uniform hypergraph of degree $\leq d$ is at most $r(d-1)$. The above bound thus implies that for $r\geq 3$, and $r,d=O(1)$, one needs $\Omega(\log n)$ rounds to compute a $\big(1-\frac{1}{d-1}\big)\cdot(r-1)/r\cdot \Delta_E$-defective $2$-edge-coloring in rank $r$-hypergraphs with maximum edge degree $\Delta_E$. Further, since line graphs of rank-$r$ hypergraphs have neighborhood independence $\theta=r$, the lower bound also implies that in $\Delta$-regular graphs with neighborhood independence $\theta\geq 3$ (for $\Delta=O(1)$ and $\theta$ divides $\Delta$), one needs $\Omega(\log n)$ rounds to deterministically compute a defective $2$-coloring with defect $\leq\big(1-\frac{\theta}{\Delta}\big)\cdot\frac{\theta-1}{\theta}$.

\subsection{Organization of the paper}
\label{sec:organization}
In \Cref{sec:overview}, we give a high-level description of all our technical steps that are needed to prove \Cref{thm:mainNeighborhood,thm:lowerbound}. The algorithm that leads to the upper bound of \Cref{thm:mainNeighborhood} is a recursive algorithm. The recursive subproblems are defined in \Cref{sec:recsubproblems}. The recursive structure consists of two main steps,  \emph{slack generation}  and \emph{color space reduction}, which are described in \Cref{sec:intro:slackgeneration,sec:intro:colspacereduction}, respectively. At the core of the color space reduction part is an algorithm to efficiently compute \emph{list defective colorings} in graphs with small neighborhood independence by using an efficient algorithm to compute good quality \emph{list arbdefective colorings} (definitions, see below). \Cref{sec:overview} is concluded with \Cref{sec:intro:lower}, which describes the high-level ideas of the lower bound result given by \Cref{thm:lowerbound}. The technical details of the paper appear in three sections. \Cref{sec:arbtodef,sec:MainSection} contain the main technical contributions of the paper, the details on how to get from list arbdefective to list defective colorings and the recursive structure and analysis of our main algorithm. The formal proof of the lower bound appears in \Cref{sec:lowerbound}.

\section{Technical Overview}
\label{sec:overview}

We next discuss the main technical ideas that are necessary to prove \Cref{thm:mainNeighborhood,thm:lowerbound}. As \Cref{thm:lowerbound} is an adaptation of a similar lower bound in \cite{BalliuHLOS19}, we primarily focus on \Cref{thm:mainNeighborhood}. Before discussing the actual algorithm leading to our main theorem, we have to introduce different variants of \emph{defective colorings}.

\subsection{Defective Colorings}\label{sec:defective}

Defective colorings are a key tool that is used in essentially all modern deterministic distributed coloring algorithms. In its most standard form, for an integer parameter $d\geq 0$, a $c$-coloring of the nodes of a graph $G=(V,E)$ is called $d$-defective if each node has at most $d$ neighbors of the same color. Defective colorings have been used to decompose a given graph into a small number of subgraphs of smaller degree that can then be colored more efficiently either in parallel (e.g., \cite{BalliuKO20,BalliuBKO22,BarenboimE09,BarenboimE10,barenboim11,Kuhn20,GhaffariKuhn21}) or sequentially (e.g., \cite{BarenboimE09,BarenboimE10,barenboim11,Barenboim2016,fraigniaud16local,Kuhn20,GhaffariKuhn21}). Although defective coloring might seem to be a simple relaxation of proper graph coloring, it is algorithmically quite different. While it is known that every graph has a $d$-defective coloring with $\lceil\frac{\Delta+1}{d+1}\rceil$ colors~\cite{lovasz66}, for general values of $d$, there is no $f(\Delta)\cdot\log^* n$-round distributed algorithm known to compute a $d$-defective coloring even with $O(\Delta/d)$ colors. The only such algorithm known is an $O(\log^* n)$-round algorithm to compute a $d$-defective coloring with $O\big(\big(\frac{\Delta}{d+1}\big)^2\big)$ colors~\cite{Kuhn2009}. Since we use a variant of this algorithm as a tool in our algorithm, we state it formally.

\begin{lemma}[\cite{Kuhn2009, KawarabayashiS18}] 
    \label{lemm:defColorBlackBox}
    For any $1 \geq \alpha > 0$, there exists a deterministic \CONGEST algorithm to compute a defective coloring of a given graph $G$ with $O(1/\alpha^2)$ colors such that each node has at most $\alpha \cdot \deg(v)$ neighbors of the same color. If the graph is initially equipped with a proper $q$-coloring, the algorithm requires $O(\log^* q)$ communication rounds.
\end{lemma}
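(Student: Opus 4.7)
The plan is to build the algorithm as an iterated color-reduction procedure in the spirit of Linial's classical $O(\log^* n)$ coloring algorithm, the only change being that each reduction step is allowed to introduce a small, controlled amount of defect. Concretely, I will design a one-round subroutine that, starting from a proper $q'$-coloring, produces a new coloring with $O\bigl((\log q'/\alpha')^2\bigr)$ colors and defect at most $\alpha'\deg(v)$ at each node $v$; then iterating this subroutine will drive the palette size down to $O(1/\alpha^2)$ in $O(\log^* q)$ rounds while the defects accumulate to at most $\alpha\deg(v)$.

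For the single-round subroutine I would follow the standard Linial-polynomial template. Let $p$ be the smallest prime with $p\log p=\Theta(\log q'/\alpha')$, so $p=O\bigl(\log q'/(\alpha'\log\log q')\bigr)$. Inject each color $c\in[q']$ into a polynomial $P_c\in\mathbb{F}_p[x]$ of degree $k=\lceil\log q'/\log p\rceil$; distinct $P_c,P_{c'}$ agree on at most $k$ points of $\mathbb{F}_p$. In one communication round each node $v$ learns the colors $c_w$ of its neighbors, then picks the point $x_v\in\mathbb{F}_p$ that minimizes $|\{w\in N(v):P_{c_w}(x_v)=P_{c_v}(x_v)\}|$. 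Averaging the $k\deg(v)$ total coincidences (across pairs of distinct polynomials) over the $p$ possible values of $x_v$ gives a minimum of at most $k\deg(v)/p\leq \alpha'\deg(v)$. The new color $(x_v,P_{c_v}(x_v))\in\mathbb{F}_p^2$ then has the property that any neighbor $w$ sharing the new color must satisfy both $x_w=x_v$ and $P_{c_w}(x_v)=P_{c_v}(x_v)$, so the defect is at most $\alpha'\deg(v)$; the palette has size $p^2=O\bigl((\log q'/\alpha')^2\bigr)$, and each message is $O(\log q')$ bits, so everything fits \CONGEST.

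To assemble the full algorithm I would run this subroutine $t=O(\log^*q)$ times with carefully chosen tolerances $\alpha_0,\alpha_1,\ldots,\alpha_{t-1}$. Let $q_i$ denote the palette size after round $i$. The recurrence $q_{i+1}=O\bigl((\log q_i/\alpha_i)^2\bigr)$ collapses to a constant depending only on $1/\alpha$ after $\log^* q$ iterations, provided the $\alpha_i$'s are not too small. For the defect, the key observation is additive: if $D_i(v)$ is the defect in the $i$-th coloring, then in the next round two neighbors can become same-colored only by newly colliding, so $D_{i+1}(v)\leq D_i(v)+\alpha_i\deg(v)$ and hence $D_t(v)\leq \bigl(\sum_i\alpha_i\bigr)\deg(v)$. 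Choosing the $\alpha_i$'s as a geometrically decreasing sequence summing to $\alpha$, with the last term $\alpha_{t-1}=\Theta(\alpha)$, simultaneously achieves total defect $\leq \alpha\deg(v)$ and final palette $O(1/\alpha^2)$.

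The main obstacle I expect is the bookkeeping in this last step: balancing the budget $\sum_i\alpha_i\leq\alpha$ against the double-exponential collapse $q_{i+1}=O((\log q_i/\alpha_i)^2)$ so that the final palette is $O(1/\alpha^2)$ rather than some factor like $(\log^*q)^2/\alpha^2$. This can be resolved by starting with large $\alpha_i$ (when $q_i$ is still huge the $\log q_i$ factor dwarfs everything anyway) and reserving $\Theta(\alpha)$ of the budget for the last few rounds, where $\log q_i$ has already become $O(1)$ and the subroutine's bound genuinely reads $O(1/\alpha^2)$. Everything else — the polynomial machinery, the $\mathbb{F}_p$-averaging argument, and the $O(\log^* q)$ round count — is standard Linial-style analysis.
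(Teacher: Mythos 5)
The paper gives no proof of this lemma---it is imported as a black box from the cited works \cite{Kuhn2009,KawarabayashiS18}---and your reconstruction follows exactly the approach used there: an iterated Linial-style one-round reduction via low-degree polynomials over $\mathbb{F}_p$, where each node picks an evaluation point minimizing coincidences among its neighbors and the relative defects accumulate additively across iterations. The one-round subroutine, the averaging bound $k\deg(v)/p\leq\alpha'\deg(v)$, and the recurrence $D_{i+1}(v)\leq D_i(v)+\alpha_i\deg(v)$ are all correct.

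The only part not yet closed is the schedule of the $\alpha_i$, which you flag yourself but describe inconsistently: a sequence of $t=\Theta(\log^* q)$ terms summing to $\alpha$ cannot simultaneously be ``geometrically decreasing,'' ``start with large $\alpha_i$,'' and have $\alpha_{t-1}=\Theta(\alpha)$. The reading that works is the \emph{increasing} one, $\alpha_i=\Theta(\alpha)\cdot 2^{-(t-i)}$: the early factors $2^{t-i}$ enter the palette recurrence only inside iterated logarithms and get absorbed, so that $\log q_{t-1}=O(\log(1/\alpha))$ before the final step. The opposite reading (large $\alpha_i$ first, $\alpha_{t-1}=\Theta(\alpha 2^{-t})$ last) leaves a spurious $4^{\log^* q}$ factor in the final palette and genuinely fails. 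You also need to retain the denominator you dropped: with $p\log p=\Theta(\log q'/\alpha')$ one has $p=\Theta\bigl(\tfrac{\log q'/\alpha'}{\log(\log q'/\alpha')}\bigr)$, and it is precisely this $\log p$ in the denominator that cancels the residual $\log q'=\Theta(\log(1/\alpha))$ in the last iteration and yields a final palette of $p^2=O(1/\alpha^2)$ rather than $O(\log^2(1/\alpha)/\alpha^2)$. With these two repairs your argument gives the lemma as stated.
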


Usually, when computing a $(\Delta+1)$-coloring, this is done in several phases. One starts with all nodes being uncolored and in each phase, a given partial proper coloring of the nodes is extended by coloring some of the previously uncolored nodes. When solving $(\Delta+1)$-coloring in this way, the problem that needs to be solved in each step is a list coloring problem or more precisely a $(\mathit{degree}+1)$-list coloring problem: Each node $v$ is initially given a list $L_v$ of available colors and in the end, the nodes have to be colored properly such that each node $v$ is colored with a color from its list $L_v$. In a $(\mathit{degree}+1)$-list coloring instance, the list $L_v$ of each node $v$ is of size $|L_v|\geq\deg(v)+1$. The notion of defective coloring has been extended to list colorings implicitly in \cite{Kuhn20} and explicitly in \cite{FK23}. Formally, in \emph{list defective coloring}, each node $v$ obtains a color list $L_v$ together with a defect function $d_v:L_v \to \mathbb{N}_0$ that assigns non-negative integer values to the colors in $L_v$. The coloring has to assign a color $x_v\in L_v$ to each node $v$ such that the number of neighboring nodes $u$ that get assigned color $x_u=x_v$ is at most $d_v(x_v)$. While standard defective colorings can be used to recursively divide non-list coloring instances, list defective colorings can be used to recursively divide list coloring instances (for details, see the discussion on color space reduction below).

Finally, we need to define the notion of \emph{arbdefective colorings}, which was introduced in \cite{BarenboimE10}. In \cite{BarenboimE10}, a $d$-arbdefective $c$-coloring of a graph was defined as a coloring of the nodes with $c$ colors such that the graph induced by each color class has \emph{arboricity} at most $d$. We use a slightly different definition, which is locally checkable and which has turned out to be algorithmically more convenient (see, e.g., \cite{BarenboimE11,BarenboimEG18,Kuhn20,hideandseek}). We define a $d$-arbdefective $c$-coloring of a graph as a $c$-coloring of the nodes \emph{together} with an orientation of the monochromatic edges so that every node has at most $d$ outneighbors of the same color. We further note that this only guarantees that each color class has pseudoarboricity at most $d$ (and thus arboricity at most $d+1$). One could strengthen the definition to imply arboricity at most $d$ by requiring the edge orientation of each color class to be acyclic. We do not need to make this additional restriction.

List arbdefective colorings are defined in the natural way: Each node $v$ must be colored with a color $x_v$ from its list $L_v$, the monochromatic edges are oriented such that $v$ has at most $d_v(x_v)$ \textit{outneighbors} of color $x_v$. Unlike standard defective colorings, $d$-arbdefective colorings with $\lceil\frac{\Delta+1}{d+1}\rceil$ colors can be computed by a simple sequential greedy algorithm and as a result, efficient distributed arbdefective coloring algorithms with a small number of colors are known. In particular, in \cite{BarenboimEG18}, it is shown that a $d$-arbdefective coloring with $O(\Delta/d)$ colors can be computed deterministically in $O(\Delta/d+\log^* n)$ rounds. The main reason why for graphs of bounded neighborhood independence, we have faster deterministic coloring algorithms than for general graphs, is the following observation, which was first made in \cite{BarenboimE11}. If $G$ is a graph with neighborhood independence $\theta$, then any $d$-arbdefective coloring of $G$ at the same time is a $(2d+1)\cdot\theta$-defective coloring of $G$. For graphs of bounded neighborhood, we therefore have efficient algorithms to compute $d$-defective colorings with $O(\Delta/d)$ colors. All the fast $f(\Delta)+O(\log^* n)$-round coloring algorithms that specifically work for families of graphs with bounded neighborhood independence are in particular based on this fact~\cite{BarenboimE11,Kuhn20,BalliuKO20,BalliuBKO22}.

We conclude this short discussion of defective colorings with the following notion of \emph{slack}, which specifies how loose a given list defective or list arbdefective coloring instance is.

\begin{definition}[Slack]\label{def:slack}
  Consider a graph $G=(V,E)$ and a list defective or list arbdefective coloring instance with lists $L_v$ and defect functions $d_v$ for all $v\in V$. We say that the instance has slack $S$ if
  \[
    \sum_{x \in L_v} (d_v(x) + 1) > S \cdot \deg(v).
  \]
\end{definition}

\subsection{Recursive Subproblems}\label{sec:recsubproblems}

\begin{figure}[t]
  \centering
  \tikzset{every picture/.style={line width=0.75pt}} 
\definecolor{reddish}{rgb}{0.65,0,0}

\begin{tikzpicture}[x=0.75pt,y=0.75pt,yscale=-1,xscale=1,scale=0.82, every node/.style={scale=0.82}]

\draw [<-]   (100,99) -- (220.5,99) ;
\draw [<-]   (270,120) -- (270,200) ;
\draw [<-]   (318,220) -- (470, 220) ;
\draw [<-]   (318,99) -- (470, 99) ;

\draw (20,85) -- (95,85) -- (95,111) -- (20,111) -- cycle  ;
\draw (55,98) node [anchor=center][inner sep=0.75pt]   [align=center] {$P_{A}$(2, $C$)};
\draw (220,85) -- (317,85) -- (317,112) -- (220,112) -- cycle  ;
\draw (265,99) node [anchor=center][inner sep=0.75pt]   [align=center] {$P_{A}(32 \cdot \theta$, $C$)};
\draw (162,85) node [anchor=center][inner sep=0.75pt]   [align=center] {$O\left( \theta^{2} \right)$ instances};
\draw    (220,205) -- (317,205) -- (317,235) -- (220,235) -- cycle  ;
\draw (265,220) node [anchor=center][inner sep=0.75pt]   [align=center] {$P_{D}\left(16 \cdot \theta ,C^{\varepsilon} \right)$};
\draw (255,165) node [anchor=center][inner sep=0.75pt]  [rotate=90] [align=center] {$1$ instance};
\draw    (475,85) -- (585,85) -- (585,112) -- (475,112) -- cycle  ;
\draw (530,99) node [anchor=center][inner sep=0.75pt]   [align=center] {$P_{A}\left( 2,\ C^{1-\varepsilon} \right)$};
\draw (395,85) node [anchor=center][inner sep=0.75pt]  [rotate=0] [align=center] {$C^{\varepsilon}$ parallel instances};
\draw (395, 205) node [anchor=center][inner sep=0.75pt]  [rotate=0] [align=center] {$O(\log^3 \Delta )$ instances };
\draw (160,110) node [anchor=center][inner sep=0.75pt]  [color=reddish  ,opacity=1 ] [align=center] {\Cref*{lemma:SlackGeneration}};
\draw (285,165) node [anchor=center][inner sep=0.75pt]  [color=reddish  ,opacity=1 ,rotate=90] [align=center] {\Cref*{lemm:colorSpaceReduction}};
\draw (395,110) node [anchor=center][inner sep=0.75pt]  [color=reddish  ,opacity=1, rotate=0 ] [align=center] {\Cref*{lemm:colorSpaceReduction}};
\draw (395,235) node [anchor=center][inner sep=0.75pt]  [color=reddish  ,opacity=1 ,rotate=0] [align=center] {\Cref*{thm:ArbToDef}};
\draw    (475,205) -- (585,205) -- (585,235) -- (475,235) -- cycle  ;
\draw (530, 220) node [anchor=center][inner sep=0.75pt]   [align=center] {$P_{A}\left( 2,\ C^{\varepsilon} \right)$};

\end{tikzpicture}
  \caption{The recursive structure for solving list arbdefective coloring instances $P_A(2,C)$ with slack $\geq 2$ and colors from a space of size $C$. This observation leads to the statement $T_A(2, C) \leq O(\theta^2) \cdot (\log^3 \Delta \cdot T_A(2, C^\varepsilon) + T_A(2, C^{1-\varepsilon}))$  (cf. \Cref{eq:splitting}). Taking either $\varepsilon = O(\log \log \log \Delta/ \log \log \Delta)$ or $\varepsilon = 1/2$ (depending on the value of $\theta$) results in the complexity claimed in \Cref{thm:mainNeighborhood}.}
  \label{fig:boundedNeighScheme}
\end{figure}
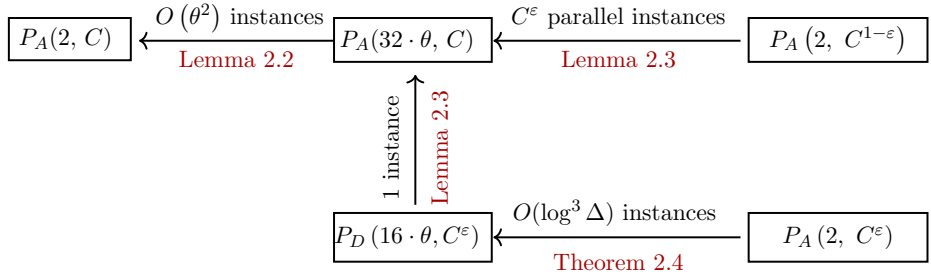

First note that list arbdefective coloring generalizes $(\Delta+1)$-list coloring in the following way. First note that $(\Delta+1)$-coloring is clearly a special case of the $(\mathit{degree}+1)$-list coloring problem. Further, if every node $v$ is given a list $L_v$ of colors and the allowed defect or arbdefect for every $v\in V$ and every $x\in L_v$ is set to $d_v(x)=0$, then the given list defective or list arbdefective coloring instance is a standard list coloring instance. The instance corresponds to a $(\mathit{degree}+1)$-list coloring instance if and only if its slack is at least $1$. If we have an algorithm to solve list arbdefective coloring instances with slack $1$, we can therefore use this algorithm to solve $(\mathit{degree}+1)$-list coloring instances and thus also to solve $(\Delta+1)$-coloring.

Our algorithm to achieve the upper bound of \Cref{thm:mainNeighborhood} is a recursive algorithm to solve list arbdefective coloring instances. As part of the recursive structure, we also solve some list defective coloring instances. Both problems, we parameterize by two quantities: the slack $S$ (as given by \Cref{def:slack}) and the size of the color space $C$. The size of the color space is equal to $C$ if all the color lists $L_v$ of the nodes are composed of colors from a set of size $C$ (e.g., if the lists only use colors $1,\dots,C$). For the remainder of the analysis, we use the following notation.  

We use $P_A(S, C)$ to denote the family of list arbdefective coloring instances with slack at least $S$ and with a color space of size at most $C$. Similarly, we use $P_D(S,C)$ to denote the corresponding list defective coloring problems. Recall that we defined (list) arbdefective coloring to be a coloring of the nodes together with an orientation of the monochromatic edges. Solving an instance of $P_A(S,C)$ therefore means to compute the coloring and the required edge orientations. The optimal round complexities of solving problems of the form $P_A(S, C)$ and $P_D(S, C)$ deterministically in the \LOCAL model are denoted by $T_A(S, C)$ and $T_D(S, C)$, respectively. Before continuing with the discussion of how we solve $P_A(S,C)$ and $P_D(S,C)$, we make a remark regarding the parameterization by the slack $S$ and the color space size $C$. Typically, the parameterization of both the number of allowed colors and the time complexity of local coloring problems is mostly done as a function of the maximum degree $\Delta$ of the graph. Naturally, the time complexity as a function of $\Delta$ decreases if the number of available colors increases as a function of $\Delta$. Consider the problem of coloring with $f(\Delta)\geq\Delta+1$ colors. In this case (if we do not consider list coloring), the size of the color space is $C=f(\Delta)$ and the slack is $S=f(\Delta)/(\Delta+1)$. For any of the standard coloring problems, parameterization by $S$ and $C$ thus directly implies a parameterization of both the number of colors and time complexity as a function of $\Delta$.

Note that in general $P_A(S,C)$ is an easier problem than $P_D(S,C)$, i.e., we have $T_A(S,C)<T_D(S,C)$ for any $S$ and $C$. Any solution to $P_D(S,C)$ automatically also is a solution to $P_A(S,C)$, even with an arbitrary orientation of the edges.\footnote{In some cases, it is also known that $P_D(S,C)$ is strictly harder than $P_A(S,C)$~\cite{BalliuHLOS19}.} Further, as already remarked above, if we have $d_v(x)=0$ for all $v\in V$ and all $x\in L_v$, then both $P_D(S,C)$ and $P_A(S,C)$ are equivalent to the standard proper list coloring problem with lists $L_v$ of size $|L_v|>S\cdot\deg(v)$. The highlevel recursive structure of our algorithm is depicted in \Cref{fig:boundedNeighScheme}. In the following, we briefly describe each of the steps. 

\subsection{Slack Generation}\label{sec:intro:slackgeneration}

List coloring problems naturally become easier if the color lists and thus the slack of the problem gets larger. A simple way to generate slack is the following. Starting from a problem with small slack, use some defective coloring to divide the graph into several parts of significantly smaller degree. Now, when only coloring one of the parts, but using the full list of colors, the slack of the instance to solve is much larger. If we have a list coloring algorithm, we can sequentially go through the different parts and color them. This method has first been introduced in \cite{barenboim16sublinear} and it has subsequently been optimized and used in \cite{fraigniaud16local,Kuhn20,BalliuKO20,FK23}. This technique can not only be applied to standard list coloring, but also to the more general list arbdefective coloring problem $P_A(S,C)$, i.e, one can solve a list arbdefective coloring instance $P_A(S,C)$ by iteratively solving several instances of $P_A(S',C)$ for $S'>S$. If we use the defective coloring algorithm of \cite{Kuhn2009} and if the initial slack is $S\geq S_0$ for some constant $S_0>1$, the number of required slack $S'$-instances is $O(S'^2)$. If the initial slack is $S=1$, the method only allows to reduce the maximum uncolored degree by some constant factor and one has to repeat the whole process $O(\log\Delta)$ times to fully color the graph. The number of slack $S'$-instances therefore becomes $O(S'^2\cdot\log\Delta)$ in this case. This slack generation technique is formally stated and proven in the following lemma. In our algorithm, we use the technique to first reduce $P_A(1,C)$ to $O(\log\Delta)$ instances of $P_A(2,C)$ and we then solve $P_A(2,C)$ recursively.

\begin{restatable}{lemma}{restateLemmaSlackGeneration}\label{lemma:SlackGeneration}
    Let $G=(V,E)$ be a graph that is equipped with a proper $q$-vertex coloring. For any $S \geq 1$, we have
    \begin{eqnarray}
         T_A(1,C) & \leq & O(\log\Delta)\cdot \big(T_A(2, C)+\log^*\Delta\big) + O(\log^* q)\label{eq:slackGeneration1}\\
          T_A(2, C) & \leq & O(S^2) \cdot T_A\left(S, C \right) + O(\log^* q). \label{eq:slackGeneration2}
        \end{eqnarray}
\end{restatable}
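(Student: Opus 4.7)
The proof of Lemma~\ref{lemma:SlackGeneration} uses the classical ``slack generation'' approach from \cite{barenboim16sublinear} and its subsequent refinements. The high-level idea is to invoke \Cref{lemm:defColorBlackBox} to partition $V$ into classes of much smaller within-class degree and then color these classes sequentially. When a class is processed, each of its nodes updates its defect budget by subtracting $b_v(x)$, the number of already-colored neighbors of color $x$, from $d_v(x)$, dropping colors for which this quantity would go negative; monochromatic cross-class edges are oriented as outgoing from the later-processed endpoint, so that each node's total outdegree of its own color stays within $d_v(x)$.

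\textbf{Second inequality.} Set $\alpha = 1/(S-1)$ and apply \Cref{lemm:defColorBlackBox} to obtain $c = O(1/\alpha^2) = O(S^2)$ classes $C_1, \dots, C_c$ with $\deg_{C_{c(v)}}(v) \leq \alpha \deg_G(v)$, at cost $O(\log^* q)$. Process the classes sequentially. Writing $B_v = \sum_x b_v(x)$ and noting that every already-colored neighbor of $v \in C_i$ lies outside $C_i$, one has $B_v \leq \deg_G(v) - \deg_{C_i}(v)$; combined with the starting slack $\sum_x(d_v(x)+1) > 2 \deg_G(v)$ this gives
\[
\sum_{x \in L_v}(d'_v(x)+1) \;\geq\; \sum_x(d_v(x)+1) - B_v \;>\; \deg_G(v) + \deg_{C_i}(v) \;\geq\; (1/\alpha + 1)\deg_{C_i}(v) \;\geq\; S \cdot \deg_{C_i}(v),
\]
where the drop of colors with $b_v(x) > d_v(x)$ only increases the left-hand side, since those terms contribute a non-positive amount. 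Hence the sub-instance on $G[C_i]$ lies in $P_A(S,C)$, and summing the $c = O(S^2)$ calls yields the claimed $O(S^2)\cdot T_A(S,C) + O(\log^* q)$.

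\textbf{First inequality and main obstacle.} Repeating the same computation with starting slack $1$ gives only sub-slack $\geq 1$, which does not allow a recursive call into $T_A(2,C)$. The remedy is to iterate: each of $O(\log\Delta)$ phases reduces the maximum uncolored degree by a constant factor via $O(1)$ calls to a $P_A(2,C)$ solver. In each phase, one invokes \Cref{lemm:defColorBlackBox} with $\alpha = 1/2$ on the current uncolored subgraph, obtaining $c = O(1)$ classes, and solves $P_A(2,C)$ on a carefully chosen collection of classes whose removal forces the residual uncolored graph to have within-class degree at most $\alpha\deg(v) = \deg(v)/2$; the crucial observation is that a class that is processed first, with no prior coloring, inherits full sub-slack $\sum_x(d_v(x)+1)/\deg_{C_1}(v) > \deg_G(v)/\deg_{C_1}(v) \geq 1/\alpha = 2$, matching the $P_A(2,C)$ interface. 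An initial application of Linial's algorithm converts the given $q$-coloring into a proper $\Delta^{O(1)}$-coloring in $O(\log^* q)$ rounds, so that every subsequent defective coloring costs only $O(\log^* \Delta)$ rounds, which is what is required inside the outer $O(\log\Delta)$ loop. The delicate step, which I expect to be the main technical obstacle, is rigorously verifying that $O(1)$ $P_A(2,C)$-calls per phase indeed suffice to halve the max uncolored degree while each call satisfies the sub-slack invariant: sequential processing of multiple classes within a phase degrades the sub-slack below $2$, so one must carefully balance the choice of $\alpha$ and $c$ and the set of classes processed per phase so that every sub-call is genuinely in $P_A(2,C)$ and the uncolored max-degree invariant is preserved.
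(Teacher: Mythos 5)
Your proof of the second inequality is correct and matches the paper's argument (the paper uses $\eps=1/S$ where you use $\alpha=1/(S-1)$, an immaterial difference); the accounting of $B_v$, the handling of dropped colors, and the orientation of cross-class edges are all as in the paper. The gap is in the first inequality, and it is exactly the one you flag yourself: you have the right architecture ($O(\log\Delta)$ sweeps, each reducing the maximum uncolored degree by a constant factor via $O(1)$ calls to a $P_A(2,C)$ solver), but your proposed mechanism --- coloring only the first class, or a ``carefully chosen collection of classes'' --- does not close the circle. Coloring only the first class gives that class full slack but does nothing to reduce the degree of nodes in the other classes, while sweeping all classes degrades the sub-slack of later classes below $2$, as you observe. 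No choice of which classes to process resolves this by itself.

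The missing idea in the paper is an \emph{activity filter} combined with a smaller $\alpha$. Take $\eps=1/4$ (not $1/2$), so each node has at most $\deg(v)/4$ same-class neighbors, and sweep through \emph{all} $Q=O(1)$ classes in every iteration; but when class $V_i$ is processed, only the nodes $v\in V_i$ that still have at least $\deg(v)/2$ uncolored neighbors participate. For such an active node the number of already-colored neighbors satisfies $B_v\le\deg(v)/2$, so the residual list sum exceeds $\deg(v)-\deg(v)/2=\deg(v)/2\ge 2\cdot(\deg(v)/4)\ge 2\deg_{G[A_i]}(v)$, and every sub-call is genuinely a $P_A(2,C)$ instance regardless of its position in the sweep. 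Conversely, any node that remains uncolored after the sweep was inactive when its class was processed, hence already had more than half of its neighbors colored; so the residual uncolored graph has maximum degree at most half that of the current graph, and one recurses on it (the residual instance still has slack $1$ with respect to the residual degrees, by the same subtraction of $\beta_v$). This yields $O(\log\Delta)$ iterations of $O(1)\cdot T_A(2,C)+O(\log^*\Delta)$ rounds each, after a single $O(\log^* q)$-round application of Linial's algorithm, as claimed.
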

\begin{proof}
  The proof of both bounds \eqref{eq:slackGeneration1} and \eqref{eq:slackGeneration2} are based on the same idea, but they differ in some of the details. In the following, whenever we refer to only one of the cases, we refer to Case \eqref{eq:slackGeneration1} or Case \eqref{eq:slackGeneration2}. In Case \eqref{eq:slackGeneration1}, we assume that we are given a list arbdefective coloring instance on $G$ of the form $P(1,C)$ and in Case \eqref{eq:slackGeneration2}, we assume that we are given a list arbdefective coloring instance on $G$ of the form $P(2,C)$. In both cases, we assume that each node $v\in V$ has a color list $L_v$ with allowed arbdefects $d_v(x)$ for $x\in L_v$. In Case \eqref{eq:slackGeneration1}, we have $\sum_{x\in L_v} (d_v(x)+1)>\deg(v)$ and in Case \eqref{eq:slackGeneration2}, we have $\sum_{x\in L_v} (d_v(x)+1)>2\cdot\deg(v)$. In Case \eqref{eq:slackGeneration1}, we assume that we have access to an algorithm to solve list arbdefective coloring problems of the for $P_A(2,C)$ and Case \eqref{eq:slackGeneration2}, we assume that we have access to an algorithm to solve list arbdefective coloring problems of the form $P_A(S,C)$.

  The highlevel idea is to generate some slack by using a defective coloring to partition $G$ into several subgraphs of smaller degree. In Case \eqref{eq:slackGeneration1}, we set $\eps:=1/4$ and in Case \eqref{eq:slackGeneration2}, we set $\eps:=1/S$. In Case \eqref{eq:slackGeneration1}, we first use a standard algorithm of \cite{Linial1987} to compute a proper $O(\Delta^2)$-coloring of $G$ in $O(\log^* q)$ rounds. We then use \Cref{lemm:defColorBlackBox} to compute a defective coloring of $G$ with $O(1/\eps^2)$ colors such that for every node $v\in V$, at most $\eps\cdot\deg(v)$ are colored with the same color. Let $\set{1,\dots,Q}$ be the set of those $O(1/\eps^2)$ colors. For each $i\in \set{1,\dots,Q}$, we define $V_i$ to be the nodes with color $i$ and $G_i:=G[V_i]$ to be the subgraph induced by $V_i$. The algorithm now consists of $Q$ phases. Initially, all nodes are uncolored (w.r.t.\ the list arbdefective instance $P(2,C)$ or $P(S,C)$ that we have to solve). Then, in every phase, some nodes are colored and the monochromatic edges become oriented. In Case \eqref{eq:slackGeneration1}, the \emph{active} nodes in phase $i$ are all nodes in $V_i$ for which at the beginning of phase $i$, at least $\deg(v)/2$ neighbors are still uncolored. In Case \eqref{eq:slackGeneration2}, the \emph{active} nodes in phase $i$ are all nodes in $V_i$. Let $A_i\subseteq V_i$ be the set of active nodes in phase $i$. In phase $i$, we then solve a list arbdefective coloring problem on the subgraph $G[A_i]$ induced by the nodes in $A_i$. For each $v\in A_i$ and every $x\in L_v$, let $\alpha_v(x)$ be the number of neighbors of $v$ that have already been colored with color $x$ prior to phase $i$. In the list arbdefective coloring problem of phase $i$, each node $v\in A_i$ uses a color list $L_v'\subseteq L_v$ that consists of all colors $x\in L_v$ for which $\alpha_v(x)\leq d_v(x)$ and the $v$ uses defect $d'_v(x):=d_v(x)-\alpha_v(x)$ for color $x$. We need to show that this list arbdefective coloring problem has the desired slack. In Case \eqref{eq:slackGeneration1}, we have
  \[
    \sum_{x\in L_v'}d_v'(x) \geq \sum_{x\in L_v}\big(d_v(x) -\alpha_v(x)\big) \geq
    \sum_{x\in L_v} d_v(x) - \frac{\deg(v)}{2} > \frac{\deg(v)}{2}.
    \]
    In the second inequality, we use that only nodes $v$ that still have at least $\deg(v)/2$ uncolored neighbors are active. Because the degree of $v$ in $G[A_i]$ is at most $\eps\cdot\deg(v)=\deg(v)/4$, the list arbdefective coloring instance of phase $i$ has slack at least $2$. We can therefore use our list arbdefective coloring algorithm $P_A(2,C)$ that we assume in this case to solve the problem of phase $i$. In Case \eqref{eq:slackGeneration2}, we have
    \[
      \sum_{x\in L_v'} d_v'(x) \geq \sum_{x\in L_v}\big(d_v(x) -\alpha_v(x)\big) \geq
      \sum_{x\in L_v} d_v(x) - \deg(v) > \deg(v).
    \]
    Because the degree of $v$ in $G[A_i]$ is at most $\eps\cdot\deg(v)=\deg(v)/S$, the list arbdefective coloring instance of phase $i$ has slack at least $S$. Here, we can therefore use our given list arbdefective coloring algorithm $P_A(S,C)$ to solve the instance of phase $i$. In the end, all monochromatic edges between nodes that are colored in different phases are oriented from the node colored later to the node colored earlier. The monochromatic edges between nodes that are colored in the same phase are oriented by applying the existing list arbdefective coloring algorithm $P_A(2,C)$ or $P_A(S,C)$, respectively. By construction, for each node $v$ that in the end is colored with some color $x\in L_v$, the set of outneigbors with color $x$ consists of the $\alpha_v(x)$ neighbors that are colored with color $x$ in phases prior to the phase in which $v$ is colored and of the at most $d'_v(x)=d_v(x)-\alpha_v(x)$ outneighbors with color $x$ that $v$ gets in phase $i$. At the end, if some node $v$ gets colored with color $x\in L_v$, the above algorithm therefore guarantees that $v$ has at most $d_v(x)$ outneighbors of color $x$. In Case \eqref{eq:slackGeneration1}, the algorithm only colors a subset of all nodes, in Case \eqref{eq:slackGeneration2}, the algorithm colors all the nodes. In Case \eqref{eq:slackGeneration1}, for each node that remains uncolored, at least half of the neighbors are colored.

    The idea in Case \eqref{eq:slackGeneration1} therefore is to recurse so that after $O(\log\Delta)$ iterations, all the nodes are colored. Consider some node $v$ that remains uncolored and let $\beta_v$ be the number of neighbors of $v$ that are already colored and let $\beta_v(x)$ be the number of neighbors colored with a specific color $x\in L_v$. Let $L_v'\subseteq L_v$ be the list consisting of all colors $x\in L_v$ for which $\beta_v(x) \leq d_v(x)$ and define $d'_v(x):=d_v(x)-\beta_v(x)$ for all $x\in L_v'$. We use lists $L_v'$ and defects $d'_v(x)$ for the list arbdefective coloring problem on the remaining uncolored nodes. This problem still has slack $1$ because
    \[
      \sum_{x\in L_v'} d_v'(x) \geq \sum_{x\in L_v} \big(d_v(x) - \beta_v(x)\big) =
      \sum_{x\in L_v} d_v(x) - \beta_v > \deg(v) - \beta_v.
    \]
    The original list arbdefective coloring problem can be solved by solving this problem on the uncolored nodes and by orienting all monochromatic edges from later to earlier colored nodes.

    In Case \eqref{eq:slackGeneration1}, we can therefore solve the given $P_A(1,C)$ instance by solving $O(1/\eps^2)\cdot\log\Delta = O(\log\Delta)$ instances of $P_A(2,C)$ and in Case \eqref{eq:slackGeneration2}, we can solve the given $P_A(2,C)$ instance by solving $O(1/\eps^2)=O(S^2)$ instances of $P_A(S,C)$. In addition, we have to compute the defective coloring that we use to partition $G$ into parts of smaller degree. In Case \eqref{eq:slackGeneration1}, we first use $O(\log^* q)$ rounds to compute a proper $O(\Delta^2)$ coloring. We then spend $O(\log^*\Delta)$ rounds in each of the $O(\log\Delta)$ iterations to compute the defective coloring (by using \Cref{lemm:defColorBlackBox}). In Case \eqref{eq:slackGeneration2}, we can apply \Cref{lemm:defColorBlackBox} directly, which requires $O(\log^* q)$ rounds.
\end{proof}

\subsection{Color Space Reduction}\label{sec:intro:colspacereduction}
List defective coloring was originally introduced in \cite{Kuhn20,FK23} as a natural problem to recursively reduce the color space of a list (arbdefective) coloring problem. In the second step, we use exactly this idea to a given instance of list arbdefective coloring $P_A(S, C)$. We partition the color space of size $C$ into $C^\varepsilon$ parts of size $C^{1-\varepsilon}$ for some suitably chosen $0 < \varepsilon < 1$ (in this overview, we for simplicity assume that $C^\varepsilon$ and $C^{1-\varepsilon}$ are integers). The partition of the color space is done in an arbitrary way. We then assign each node $v$ to one of the $C^\varepsilon$ color subspaces. When doing this, the color list $L_v$ of node $v$ (of the list arbdefective coloring problem that we have to solve) is restricted to only the colors in the chosen color space. For the assignment of nodes to color subspaces, we set up a list defective coloring problem instance $P_D(\Theta(\theta), C^\varepsilon)$. In this list defective coloring instance, the allowed defect for a node $v$ when choosing some color subspace is proportional to the sum of the arbdefects $d_v(x)$ of $v$ (in the given list arbdefective coloring problem $P_A(S,C)$ that we need to solve) corresponding to colors $x\in L_v$ that are in the given color subspace. This results in $C^{\varepsilon}$ instances of $P_A(2, C^{1-\varepsilon})$ that operate on disjoint color spaces and can therefore be solved in parallel. The slack $S$ of the original list arbdefective coloring problem has to be chosen sufficiently large to compensate for the loss in slack in this partitioning step (which is equal to the slack of the defective coloring instance that we use to divide the color space). This color space reduction step is formally done in \Cref{lemm:colorSpaceReduction}. The choice of $\varepsilon$ depends on the value of the neighborhood independence $\theta$, for our main result, we either choose $\varepsilon = 1/2$ (if $\theta$ is large) or $\varepsilon = \Theta(\log \log \log \Delta/ \log \log \Delta)$ (if $\theta$ is small). If $\theta$ is small, the time that we need to compute the list defective coloring to divide the color space is more significant than the loss in slack in this step. At the cost of having a somewhat larger recursion depth, we can make this step more efficient by choosing $\varepsilon$ and thus $C^{\varepsilon}$ small. The list defective coloring instance to assign the color subspaces then itself has a smaller color space over which it operates. The details are given by the following lemma, proven below. To be notation-wise consistent, we partition the color space $\calC$ into (almost) equally sized subspaces $\calC_1, \calC_2, \ldots, \calC_p$ and assign each node to exactly one of these subspaces (by invoking a list defective coloring instance). 

\begin{restatable}{lemma}{restateLemmaColorSpaceReduction}\label{lemm:colorSpaceReduction}
  Let $1 \leq \sigma \leq S$ and $p \in \{1, \ldots, C\}$, then we have
    \begin{align} 
        T_A(S, C) \leq T_D\left(\sigma, p \right) + T_A\left(\frac{S}{\sigma}, \left\lceil \frac{C}{p} \right\rceil \right).
    \end{align}
\end{restatable}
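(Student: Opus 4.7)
The plan is to reduce a $P_A(S,C)$ instance to one $P_D(\sigma,p)$ instance followed by $p$ parallel $P_A(S/\sigma,\lceil C/p\rceil)$ instances on vertex-disjoint subgraphs. First, I would partition the color space $\calC$ arbitrarily into $p$ subspaces $\calC_1,\dots,\calC_p$ of size at most $\lceil C/p\rceil$. For each node $v$ and index $j\in\{1,\dots,p\}$, let $s_v(j):=\sum_{x\in L_v\cap\calC_j}(d_v(x)+1)$, so that by the slack hypothesis
$$\sum_{j=1}^{p} s_v(j)=\sum_{x\in L_v}(d_v(x)+1)>S\cdot\deg(v).$$

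Next, I would set up a list defective coloring instance in which each node $v$ has list $\tilde L_v:=\{j:s_v(j)>0\}$ and defects $\tilde d_v(j):=\lceil s_v(j)\cdot\sigma/S\rceil-1$, which is a non-negative integer for every $j\in\tilde L_v$. Using $\lceil x\rceil\geq x$, the instance has the required slack
$$\sum_{j\in\tilde L_v}(\tilde d_v(j)+1)\;\geq\;\frac{\sigma}{S}\sum_{j=1}^{p} s_v(j)\;>\;\sigma\cdot\deg(v),$$
so it is a valid $P_D(\sigma,p)$ problem and can be solved in $T_D(\sigma,p)$ rounds. Each node $v$ then holds a subspace index $j(v)\in\tilde L_v$ with at most $\tilde d_v(j(v))$ neighbors sharing the same index.

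Finally, for every $j\in\{1,\dots,p\}$ in parallel, I would recursively invoke the $P_A$-algorithm on the induced subgraph $G[V_j]$ where $V_j:=\{v:j(v)=j\}$, using lists $L'_v:=L_v\cap\calC_j$ and the original defect values $d_v$. Since $\lceil x\rceil-1<x$, each $v\in V_j$ satisfies
$$\sum_{x\in L'_v}(d_v(x)+1)=s_v(j)>\frac{S}{\sigma}\cdot\tilde d_v(j),$$
and the number of neighbors of $v$ inside $G[V_j]$ is at most $\tilde d_v(j)$, so this is a $P_A(S/\sigma,\lceil C/p\rceil)$ instance (note $S/\sigma\geq 1$ since $\sigma\leq S$). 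These $p$ subproblems live on pairwise vertex-disjoint subgraphs and pairwise disjoint color spaces, so they run in $T_A(S/\sigma,\lceil C/p\rceil)$ rounds in parallel, summing to the claimed bound. Their outputs combine to a valid $P_A(S,C)$ solution: any monochromatic edge of the combined coloring lies inside some $G[V_j]$, and its orientation together with the per-color arbdefect bound $d_v(x)$ is inherited from the recursive call. The only subtle step is the rounding in $\tilde d_v(j)$; the ceiling ensures strict slack $\sigma$ for the $P_D$ instance, while the $-1$ ensures strict slack $S/\sigma$ for every recursive $P_A$ subproblem, and dropping indices with $s_v(j)=0$ from $\tilde L_v$ keeps the defects non-negative without affecting either inequality.
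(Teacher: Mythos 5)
Your proposal is correct and follows essentially the same route as the paper: partition the color space, set up a $P_D(\sigma,p)$ instance whose defects are proportional to the list weight $\sum_{x\in L_v\cap\calC_j}(d_v(x)+1)$ in each subspace, and then recurse in parallel on the vertex-disjoint subgraphs with the restricted lists. The only difference is cosmetic — you round via $\lceil s_v(j)\sigma/S\rceil-1$ using the slack hypothesis, while the paper uses $\lfloor\sigma\deg(v)\,s_v(j)/\sum_{x\in L_v}(d_v(x)+1)\rfloor$ — and both roundings yield the two required slack inequalities.
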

\begin{proof}
  We start by defining the list defective coloring problem that we use to assign the $p$ subspaces $\calC_i$ to the nodes. The color space of this 'new' list defective coloring problem is $\set{1,\dots,p}$. The color list $X_v\subseteq\set{1,\dots,p}$ of node $v$ contains all $i\in \set{1,\dots,p}$ for which $L_{v,i}:=L_v\cap \calC_i\neq\emptyset$, i.e., all color subspaces for which $v$ has at least one color it can pick. The defect $d_{v,i}$ of node $v$ for color (subspace) $i\in X_v$ is defined as
  \[
    d_{v,i} := \left\lfloor \sigma \cdot \deg(v) \cdot \frac{\sum_{x \in L_{v, i}}(d_v(x) + 1)}{\sum_{x \in L_{v}}(d_v(x) + 1)} \right\rfloor.
  \]
  We first show that the list $X_v$ and defects $d_{v,i}$ define a list defective coloring problem with slack $\sigma$, i.e., a list defective coloring problem of the form $P_D(\sigma, p)$. For every node $v\in V$, we have
    \begin{equation*}
        \sum_{i\in X_v} (d_{v, i} + 1) > \sigma \cdot \deg(v) \cdot \frac{\sum_{i\in X_v} \sum_{x \in L_v \cap \calC_i}(d_v(x) + 1)}{\sum_{x \in L_{v}}(d_v(x) + 1)} =  \sigma \cdot \deg(v).
    \end{equation*}

    For the following, assume that we have an assignment of color subspaces such that if a node $v$ is assigned subspace $\calC_i$ for $i\in X_v$, then $v$ has at most $d_{v,i}$ neighbors that are also assigned color subspace $\calC_i$. We proceed by showing that for every $i\in \set{1,\dots,p}$, the arbdefective coloring instance among the nodes that have picked the color subspace $\calC_i$ is of the form $P_A(S/\sigma, \lceil C/p \rceil)$. For a node $v$ that is assigned color subspace $i\in X_v$, we therefore have to show that $\sum_{x\in L_{v,i}}(d_v(x)+1)>S/\sigma\cdot d_{v,i}$:
    \begin{align*}
        \sum_{x \in L_{v, i}} (d_v(x) + 1) \geq \frac{d_{v, i} \cdot \sum_{x \in L_v} (d_v(x) + 1)}{\sigma \cdot \deg(v)} > \frac{S}{\sigma} \cdot  d_{v, i} 
    \end{align*}
    Note that we used the definition of slack $S$ here, i.e., that $\sum_{x \in L_v} (d_v(x) + 1) > S \cdot \deg(v)$. Thus, solving $P_A(S, C)$ takes time $T_D\left(\sigma, p \right) + T_A\left(\frac{S}{\sigma}, \left\lceil \frac{C}{p} \right\rceil \right)$.
\end{proof}

\subsection{List Defective Coloring With Small Neighborhood Independence}\label{sec:intro:listdefective}

The last step is the only one that requires small neighborhood independence and it is the main technical contribution towards proving \Cref{thm:mainNeighborhood}. As mentioned in \Cref{sec:defective}, in graphs of neighborhood independence $\theta$, any $d$-arbdefective coloring is also a $(2d+1)\cdot\theta$-defective coloring. The reason is as follows. Assume that in a given $d$-arbdefective coloring, some node $v$ is assigned color $x$ and let $A$ be the set of neighbors of $v$ that are also colored with color $x$. Because in a $d$-arbdefective coloring, all monochromatic edges are oriented and any node has at most $d$ outneighbors of the same color, the subgraph induced by $A$ is an oriented graph with outdegree at most $d$. This subgraph can therefore be colored with at most $2d+1$ colors. Because the neighborhood independence of our graph is at most $\theta$ and because all nodes in $A$ are neighbors of a single node (of $v$), any independent set of $G[A]$ has size at most $\theta$. Hence, we have $|A|\leq (2d+1)\cdot\theta$. By using the arbdefective coloring algorithm of \cite{BarenboimEG18}, a $(d/(2 \theta)-1/2)$-arbdefective coloring with $O(\Delta\theta/d)$ colors and thus a $d$-defective coloring with $O(\Delta\theta/d)$ colors can be computed in $O(\Delta\theta/d + \log^* n)$ rounds.

In this paper, we extend this idea to the list versions of the problems. That is, in graphs of small neighborhood independence, our goal is to use an algorithm for computing list arbdefective colorings to solve some given list defective coloring instance (with larger slack). The main challenge to achieve this is that in list defective colorings, different nodes can have different defects for the same colors. This means that in the above argument, the nodes in the set $A$ could use much larger defects for color $x$ and they could therefore have much larger outdegrees than what $v$ uses for color $x$. As a result, we unfortunately cannot set up a single list arbdefective coloring instance to solve a given list defective coloring instance. The problem is relatively easy to overcome if for each node $v$, $v$ uses approximately the same defect for all the colors in its color list $L_v$. One can then iterate over $O(\log\Delta)$ defect classes (starting with large defects) and only color the nodes that use approximately the same defects in a single list arbdefective coloring instance. Because when $v$ gets its color, the neighbors that use larger defects have already been colored, $v$ can control the number of such neighbors that have the same color. One can enforce that the defects of all colors in the list of a node $v$ are equal by sacrificing an $O(\log\Delta)$-factor in the slack of the problem (each node only keeps the colors for one of the defect classes). 

In order to avoid the additional $\Theta(\log\Delta)$ factor in the slack of the list defective coloring, we extend this approach by combining it with the general idea of slack generation as described in \Cref{sec:intro:slackgeneration}. We can temporarily increase the slack by an $O(\log\Delta)$ factor by using the defective coloring algorithm of \cite{Kuhn2009} to divide the graph into $\Theta(\log^2\Delta)$ subgraphs in which the degree of each node is by an $\Omega(\log\Delta)$ factor smaller. Note however that if we do not have the additional $O(\log\Delta)$ factor in the slack, but if we nevertheless just keep one defect class in each list, the slack of the actual instance to solve still becomes smaller than $1$. We therefore still have to use the full list. We can however show that by iterating over the $O(\log\Delta)$ defect classes and over the subgraphs of the smaller degrees for each defect class, for each node, the slack becomes large enough in one of the $O(\log\Delta)$ iterations where we consider the node. This allows to solve a single list defective coloring instance $P_D(O(\theta),C)$ by iteratively solving $O(\log^3\Delta)$ list arbdefective coloring instances $P_A(2, C)$. This is formally stated in the following theorem, whose proof appears in \Cref{sec:arbtodef}.

\begin{restatable}{theorem}{restateforth}\label{thm:ArbToDef}
  Let $G=(V,E)$ be an $n$-node graph of maximum degree $\Delta$ that is properly $q$-colored. Then, for $S \geq 1$, we have
  \begin{align} \label{eq:defToArbdef}
      T_D(16 \cdot \theta \cdot S, C) \leq O(\log^3 \Delta) \cdot T_A(S, C) + O(\log^* q).
  \end{align} 
\end{restatable}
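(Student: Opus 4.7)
The strategy combines the known arbdefective-to-defective conversion for graphs of neighborhood independence $\theta$ (any $d$-arbdefective coloring is also a $(2d+1)\theta$-defective coloring, since the monochromatic neighborhood of any vertex has degeneracy at most $2d$, hence chromatic number at most $2d+1$, and every color class lies inside a single open neighborhood and therefore has size at most $\theta$) with two devices that make the argument work in the \emph{list} setting, where different vertices can have wildly different defects for the same color. The first device is a geometric bucketing of each list into $O(\log\Delta)$ \emph{defect classes} $L_v^{(j)}:=\{x\in L_v: d_v(x)+1\in[2^j,2^{j+1})\}$; inside a single class all defects are uniform up to a factor of two, which is precisely what the standard neighborhood-independence argument needs. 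The second device is the defective coloring algorithm of \Cref{lemm:defColorBlackBox}, invoked with $\alpha=\Theta(1/\log\Delta)$ to obtain in $O(\log^* q)$ rounds a partition $V=V_1\sqcup\dots\sqcup V_K$ into $K=\Theta(\log^2\Delta)$ parts with $\deg_{V_k}(v)\leq\deg(v)/\Theta(\log\Delta)$ for every $v$.

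The algorithm then iterates over the $K\cdot O(\log\Delta)=O(\log^3\Delta)$ pairs $(k,j)$, processed in decreasing order of $j$. At pair $(k,j)$ it invokes a single subroutine for $P_A(S,C)$ on the still-uncolored vertices in $V_k$ whose heaviest defect class $j^*(v)$ (the class carrying at least an $\Omega(1/\log\Delta)$ fraction of $\sum_{x\in L_v}(d_v(x)+1)$) equals $j$; such a $v$ uses the sublist $L_v^{(j)}$ together with the rescaled arbdefect $d'_v(x):=\lfloor (d_v(x)-\alpha_v(x))/(2\theta)\rfloor$, where $\alpha_v(x)$ counts the neighbors of $v$ already colored $x$ in earlier iterations. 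A direct calculation starting from the input slack $\sum_{x\in L_v}(d_v(x)+1)>16\theta S\cdot\deg(v)$ verifies that the resulting subinstance still has slack at least $S$: the $O(\log\Delta)$ loss from restricting to one defect class and the $O(\theta)$ loss from scaling the arbdefect are exactly compensated by the $\Theta(\log\Delta)$ degree reduction inside $V_k$ together with the constant-$16$ overhead in the theorem's slack. Each of the $O(\log^3\Delta)$ subinstances is then solved in $T_A(S,C)$ rounds, yielding the claimed round complexity after adding the $O(\log^* q)$ preprocessing for the initial defective coloring.

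For correctness, orient every monochromatic edge either by the arbdefective orientation of its iteration (when both endpoints share one) or from the later-colored endpoint to the earlier-colored one (across iterations). The within-iteration arbdefect together with the $\alpha_v(x)$ accounting caps $v$'s outdegree in the $x$-monochromatic subgraph at $d_v(x)/(2\theta)$. Applying the neighborhood-independence argument to the set $A_v$ of all same-$x$-colored neighbors of $v$, the oriented subgraph $G[A_v]$ has max outdegree at most the largest arbdefect appearing in $A_v$, hence is properly colorable with $O(\max_{u\in A_v}d_u(x)/\theta)$ colors, and each such color class has size at most $\theta$ because it is independent inside $N(v)$. The main subtlety I expect to wrestle with is the non-uniformity of $d_u(x)$ across $A_v$: neighbors colored in earlier iterations (larger $j^*$) have larger defects for $x$ and therefore potentially larger arbdefects, while neighbors in later iterations have smaller ones. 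The processing order large-$j$-first, combined with the $\alpha_v(x)$ correction and the later-to-earlier cross-iteration orientation, funnels the contribution of earlier-iteration neighbors into the $\alpha_v$ accounting and reduces the chromatic estimate in the neighborhood-independence argument to the in-iteration arbdefects of $v$ itself, which lie in the same factor-two defect window; the constant-$16$ slack overhead absorbs the remaining additive $O(\theta)$ so that the final count satisfies $|A_v|\leq d_v(x)$.
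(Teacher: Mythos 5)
Your architecture matches the paper's: bucket each list into $O(\log\Delta)$ geometric defect classes, use \Cref{lemm:defColorBlackBox} with $\alpha=\Theta(1/\log\Delta)$ to get $O(\log^2\Delta)$ parts of degree $\deg(v)/\Theta(\log\Delta)$, process the $O(\log^3\Delta)$ pairs large-defect-first, orient cross-iteration monochromatic edges from later to earlier, and close with the degeneracy-plus-neighborhood-independence count. But there is a genuine gap in the step you wave through as ``a direct calculation'': the claim that the subinstance at pair $(k,j^*(v))$ has slack $S$. You fix $j^*(v)$ \emph{statically} as a class carrying an $\Omega(1/\log\Delta)$ fraction of the original mass $\sum_{x\in L_v}(d_v(x)+1)>16\theta S\deg(v)$, so the budget available in that class is only about $16\theta S\deg(v)/(c\log\Delta)$. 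The correction terms $\alpha_v(x)$, however, can sum to as much as $\deg(v)$ \emph{within that single class} (your neighbors' color choices are not constrained by your bucketing), and for $\theta,S=O(1)$ we have $\deg(v)\gg 16\theta S\deg(v)/(c\log\Delta)$, so the residual mass of the chosen class can be driven to zero or below: the node may have no usable colors left and certainly no slack $S$. This is exactly the failure mode the paper flags (``if we nevertheless just keep one defect class in each list, the slack of the actual instance to solve still becomes smaller than $1$''). The paper's repair is to make the choice of class \emph{adaptive}: a node $v\in V_s$ becomes active in phase $i$ only when the \emph{residual} mass $\gamma_{v,i}=\sum_{x\in L_{v,i}}(d_v'(x)+1-a_v(x,i))$ exceeds $S\cdot\deg_{G_s}(v)$, and an averaging argument over the $\lceil\log\Delta\rceil$ phases — using that $\sum_i\sum_{x\in L_{v,i}}a_v(x,i)\le\deg(v)$ is absorbed by the extra $+1$ in the rescaled slack $(S+1)\deg(v)$ — shows every node is activated in at least one phase. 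Without this mechanism your slack verification does not go through.

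A secondary, fixable issue: you bucket by $d_v(x)+1\in[2^j,2^{j+1})$, so an inneighbor $u$ colored in a later iteration $j'\le j$ only satisfies $d_u(x)+1<2^{j'+1}\le 2^{j+1}\le 2(d_v(x)+1)$, i.e., its defect (hence outdegree) for $x$ may be up to twice $v$'s. The paper rounds the rescaled defects down to \emph{exact} powers of two, so that $d_u'(x)\le d_v'(x)$ holds exactly for all same-or-later-phase inneighbors, which is what the degeneracy bound $2d_v'(x)-2$ and the final count $|A|\le 2\theta\, d_v'(x)$ need. With your half-open buckets you lose another factor of $2$ that must be budgeted into the constant $16$.
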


\subsection{Defective Hypergraph 2-Edge Coloring Lower Bound}\label{sec:intro:lower}

The proof of \Cref{thm:lowerbound} builds on a lower bound argument of \cite{BalliuHLOS19}. There, it is shown that computing a $(d-2)$-defective $2$-coloring in $d$-regular trees requires $\Omega(\log n)$ deterministic rounds. The core idea is a reduction from the sinkless orientation problem in $2$-colored regular bipartite graphs, which itself has a deterministic $\Omega(\log n)$ lower bound~\cite{ChangHLPU18}. Our contribution is to extend this reduction to defective $2$-edge colorings of hypergraphs of small rank $r$ (or equivalently, to $2$-defective vertex colorings of line graphs of rank-$r$ hypergraphs).

The argument in \cite{BalliuHLOS19} relies on the complexity classification of LCL problems in bounded-degree graphs \cite{chang16exponential}: every such problem either requires $\Omega(\log n)$ deterministic rounds or can be solved deterministically in time $O(\log^* n)$. In the latter case, one first computes a proper distance-$k$ coloring with $O(1)$ colors for some (problem-dependent) constant $k$. Afterwards the problem can be solved in $O(1)$ rounds by just using those colors and without using unique IDs. To rule out the $O(\log^* n)$ case for $(d-2)$-defective $2$-coloring in $d$-regular trees, one can thus assume the existence of a constant-time algorithm that works given such a distance-$k$ coloring and lead this to a contradiction.

The reduction proceeds as follows. For a given $\Delta$-regular $2$-colored graph $G$, where $\Delta=d(d-1)^{2k-1}$, we first replace each edge with two half-edges. Each node with its $\Delta$ incident half-edges is then replaced by a $d$-regular tree of height $2k$ whose leaves correspond to the half-edges. When two nodes $u,v$ are connected in $G$, we merge the corresponding leaves in their trees and add some additional edges to restore degree $\Delta$. This yields a graph where all nodes have degree $d$ or $1$, and on which we can run the assumed $(d-2)$-defective $2$-coloring algorithm (say with colors black and white). By controlling the distance-$k$ coloring, one can enforce that all nodes on one side of the bipartition of $G$ output black and the nodes on the other side output white. If a $d$-regular tree is complete up to some even distance $2\ell$ from the root, one can show that in any valid $(d-2)$-defective $2$-coloring, one of the nodes at distance $2\ell$ must have the same color as the root. Because the leaf nodes of the $d$-regular trees of height $2k$ from both sides are merged and the roots on the two sides use opposite colors, this implies that each root node sees both colors among the nodes at distance $2k$. On the original $\Delta$-regular bipartite graph, this induces a $2$-coloring of the edges such that each node is incident to edges of both colors and in $2$-colored bipartite graphs, such an edge coloring is equivalent to a sinkless orientation, completing the reduction.

To adapt this to hypergraphs of rank $r$, we use the standard representation of a $d$-regular rank-$r$ hypergraph as a biregular bipartite graph: degree-$r$ nodes correspond to hyperedges, and degree-$d$ nodes to original vertices. A $\delta$-defective edge-coloring of the hypergraph then corresponds to coloring the degree-$r$ nodes such that for each such node, at most $\delta$ of the degree-$r$ nodes at distance two share the same color. In the reduction, each node with $\Delta$ half-edges is replaced by a $(r,d)$-biregular tree of height $4k$, rooted at a degree-$r$ node. In such trees, every degree-$r$ node has $r(d-1)$ degree-$r$ nodes at distance two, of which at least $d$ must have the opposite color if the defect is bounded by $r(d-1)-d$. Because in a rooted $(r,d)$-biregular tree, each degree-$r$ node has only $d-1$ nodes at distance $2$ that are either closer to or equally far from the root, at least one node with the opposite color must be one of the children's children. This guarantees that at every fourth level, some node shares the root’s color. By ensuring opposite root colors for the two sides of the bipartition, the same argument as before induces an edge coloring of the bipartite graph with $2$ colors that forces both colors at every node—again equivalent to a sinkless orientation. The full proof is given in \Cref{sec:lowerbound}.
\newpage


\section{List Defective Coloring Algorithm}\label{sec:arbtodef}

As discussed in \Cref{sec:intro:listdefective}, any $d$-arbdefective coloring of a graph $G$ of neighborhood independence $\theta$ is also a defective coloring of $G$ with defect at most $(2d+1)\cdot\theta$. The goal of this section is to generalize this relation to the list versions of the problems and to show how one can solve a given list defective coloring instance with slack at least $c\cdot \theta$ for a sufficiently large constant $c$ by using a given list arbdefective algorithm (and thus to prove \Cref{thm:ArbToDef}). As stated in \Cref{sec:intro:listdefective}, the relation between list defective and list arbdefective coloring is not as straightforward as in the non-list setting, primarily because the allowed defect for each color might be very different for different nodes.

To overcome this problem, we partition each list into colors of similar defect and color the nodes in different phases, where in each phase, we use colors with approximately equal defects. To ensure that the used lists are still sufficiently large, we use the idea of \emph{slack generation} (similar to \Cref{sec:intro:slackgeneration}) in each phase. In the following, we describe our algorithm in detail.

\subparagraph{Algorithm Description.}
The aim of the algorithm is to solve a list defective coloring instance of the form $P_D(4 \cdot (\theta+1) \cdot (S + 1), C)$, i.e., with slack $4 \cdot (\theta+1) \cdot (S + 1)$ for a given parameter $S\geq 1$ and with a color space $\mathcal{C}$ of size $|\mathcal{C}|=C$. To solve the given $P_D(4 \cdot (\theta+1) \cdot (S + 1), C)$ instance, we assume that we are given a list arbdefective coloring algorithm that solves instances of the form $P_A(S, C)$ with slack $S$ and the same color space $\mathcal{C}$. Since the list defective problem that we need to solve has slack $4 \cdot (\theta+1) \cdot (S + 1)$, we have color lists $L_v$ that satisfy the following condition:
\begin{align}\label{eq:InitialSlackCondition}
  \sum_{x \in L_v} (d_v(x) + 1) > 4 \cdot (\theta+1) \cdot (S + 1) \cdot \deg(v).
\end{align}
The algorithm for a given graph $G=(V,E)$ with neighborhood independence at most $\theta$ now proceeds as follows.

\begin{enumerate}
    \item Each node $v\in V$ defines a new defect function $d_v':L_v\to \mathbb{N}_0$ as follows.
    \begin{align} \label{eq:DefectDPrime}
        \forall x\in L_v\,:\, d_v'(x) := 2^{\lfloor \log_2\frac{d_v(x) + 1}{2(\theta+1)} \rfloor} - 1.
    \end{align}
    Note that this new defect function assigns non-negative integer defects to all the colors such that for all $v$ and $x\in L_v$, $d_v'(x) + 1$ is an integer power of $2$. With the new defects, the condition in \eqref{eq:DefectDPrime} implies the following:
    \begin{align}\label{eq:RemainingSlack}
        \sum_{x \in L_v} (d_v'(x) + 1) \geq (S + 1) \cdot \deg(v).
    \end{align}

  \item Use \Cref{lemm:defColorBlackBox}  to compute a defective coloring with colors $\mathcal{Q}=\set{1,\dots,Q}$ such that $Q = O(\log^2 \Delta)$ and each node $v$ has at most $\deg(v)/ \lceil \log \Delta \rceil$ neighbors of the same color. This partitions the nodes $V$ of $G$ into $V_1, \dots, V_Q$. In the following, to distinguish the colors in $\mathcal{C}$ that we need to assign to the nodes and the colors in $\mathcal{Q}$ of this defective coloring, we refer to the former colors as $\mathcal{C}$-colors and to the latter colors as $\mathcal{Q}$-colors. For $q \in \{1, ..., Q\}$, we use $G_q=G[V_q]$ to denote the subgraph induced by the nodes colored with $\mathcal{Q}$-color $q$.
    
  \item The algorithm now consists of $\lceil\log\Delta\rceil$ phases $i = \lceil \log \Delta \rceil -1, \ldots, 0$ (in descending order). Each phase consists of $Q$ steps, where we iterate over the $\mathcal{Q}$-colors of the defective coloring. In step $s\in \mathcal{Q}$ of each phase $i$, we color a subset of the nodes in $V_s$ (details are specified below).  For each phase $i$, each node $v\in V$, and each $\mathcal{C}$-color $x\in L_v$ we define $a_v(x, i)$ as follows. Assume that $v\in V_s$. Then $a_v(x, i)$ denotes the number of neighbors of $v$ that are colored with $\mathcal{C}$-color $x \in L_v$ before step $s$ of phase $i$ starts.
    
  \item For a node $v\in V_s$ (for $s\in \mathcal{Q}$) and a phase $i$, we define
    \[
      \gamma_{v, i} := \sum_{x \in L_{v, i}}\big(d_v'(x) + 1 - a_v(x, i)\big),\text{ where }
      L_{v, i} := \{ x \in L_v | d_{v}'(x) + 1 = 2^i \}.
    \]
  A node $v\in V_s$ is defined as \textit{active} in phase $i$ if it is still uncolored at the beginning of the phase and if $\gamma_{v, i} > S \cdot \deg_{G_s}(v)$, where $\deg_{G_s}(v)$ denotes the number of neighbors of $v$ in $V_s$.
    
  \item The algorithm iterates through phases $i = \lceil \log \Delta \rceil - 1, \ldots, 0$. In each phase, we further iterate through the $Q$ $\mathcal{Q}$-colors, where in step $s\in\set{1,\dots,Q}$ of phase $i$, we color all the nodes in $V_s$ that are \emph{active} in phase $i$ by using the given arbdefective coloring algorithm $P_A(S, C)$. Each such node $v$ uses $L_{v,i}$ as its color list. The allowed arbdefect of $v$ for color $x\in L_{v,i}$ is defined as $\gamma_{v, i}(x) := d_v'(x) - a_v(x, i)$. The arbdefective coloring instance has slack at least $S$ because $v$ is active and thus $\gamma_{v, i} > S \cdot \deg_{G_s}(v)$.
\end{enumerate}

For the analysis of our algorithm, we also define an orientation of all monochromatic edges as follows. Edges between nodes that are participating in the same instance of $P_A(S, C)$ are oriented by the given arbdefective coloring algorithm. All other monochromatic edges are oriented from the node that is colored later to the node that is colored earlier.

\subparagraph{Algorithm Analysis.}  To show that the algorithm indeed solves the given list defective coloring instance $P_D(4 \cdot (\theta+1) \cdot (S + 1), C)$, we have to show two crucial statements. First, we have to show that every node eventually becomes \textit{active} and is therefore colored (\Cref{lemm:WillBeColored}). Furthermore, we need to demonstrate that after termination, nodes with $\mathcal{C}$-color $x$ have at most $d_v(x)$ neighbors of that color (\Cref{lemm:Neighbors}).
\begin{lemma}
    \label{lemm:WillBeColored}
    The above algorithm colors every node $v\in V$. Furthermore, if $v$ is colored with color $x\in L_v$, then at most $d'_v(x)$ outneighbors of $v$ are colored with color $x$. 
\end{lemma}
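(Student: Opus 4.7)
The plan is to handle the two claims separately. The outneighbor bound follows directly from tracing through the orientation rule: suppose $v \in V_s$ is colored with some $\calC$-color $x$ in step $s$ of phase $i$. By construction $x \in L_{v,i}$ and the arbdefective coloring invocation is called with arbdefect $\gamma_{v,i}(x) = d_v'(x) - a_v(x,i)$ for color $x$, so $v$ receives at most that many outneighbors of color $x$ from the nodes colored in the same invocation. The $a_v(x,i)$ neighbors already colored with $x$ before step $s$ of phase $i$ become outneighbors of $v$ via the later-to-earlier orientation rule, while any neighbor colored with $x$ strictly after $v$ is oriented away from $v$ and contributes $0$. Summing the two contributions yields at most $d_v'(x)$, as required.

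For the main claim that every node is colored, I would argue by contradiction. Assume $v \in V_s$ is never active. Then $\gamma_{v,i} \leq S \cdot \deg_{G_s}(v)$ for each $i \in \{0, \dots, \lceil \log \Delta \rceil - 1\}$, and summing over the $\lceil \log \Delta \rceil$ phases together with the defective-coloring guarantee $\deg_{G_s}(v) \leq \deg(v)/\lceil \log \Delta \rceil$ gives $\sum_i \gamma_{v,i} \leq S \cdot \deg(v)$. To derive a contradiction, I would lower bound the same quantity by rewriting
\[
    \sum_i \gamma_{v,i} \;=\; \sum_{x \in L_v}\bigl(d_v'(x)+1\bigr) \;-\; \sum_i \sum_{x \in L_{v,i}} a_v(x,i),
\]
using that $\{L_{v,i}\}_i$ partitions $L_v$ because every $d_v'(x)+1$ is an integer power of two. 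The first sum strictly exceeds $(S+1)\deg(v)$ by \eqref{eq:RemainingSlack}, which is in fact strict because \eqref{eq:InitialSlackCondition} is strict and the floor in \eqref{eq:DefectDPrime} loses at most a factor of two. For the double sum, each neighbor $u$ of $v$ whose assigned color $x_u$ lies in $L_v$ contributes to exactly one term, namely $a_v(x_u, i_u)$ with $i_u$ the unique index satisfying $x_u \in L_{v,i_u}$; every other neighbor contributes $0$. Hence the double sum is at most $\deg(v)$, and therefore $\sum_i \gamma_{v,i} > S \cdot \deg(v)$, contradicting the upper bound.

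The only delicate step is this bookkeeping observation that the double sum $\sum_i \sum_{x \in L_{v,i}} a_v(x,i)$ is at most $\deg(v)$. This is precisely what the rounding in \eqref{eq:DefectDPrime} buys us: since the equal-defect classes $L_{v,i}$ partition $L_v$, each neighbor of $v$ is charged to the single class matching its own chosen color. That $\deg(v)$ cushion is exactly cancelled by the unit of slack separating \eqref{eq:RemainingSlack} from the activity threshold $S \cdot \deg_{G_s}(v)$, and this is why the constant $4(\theta+1)$ in the original slack requirement is chosen to make \eqref{eq:RemainingSlack} hold strictly. Everything else in the argument is a straightforward manipulation of the definitions of $\gamma_{v,i}$, $a_v(x,i)$, and the defective coloring guarantee from \Cref{lemm:defColorBlackBox}.
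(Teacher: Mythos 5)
Your proposal is correct and follows essentially the same route as the paper: the outneighbor bound is obtained by splitting the color-$x$ neighbors into the $a_v(x,i)$ earlier-colored ones and the at most $d_v'(x)-a_v(x,i)$ outneighbors from the $P_A(S,C)$ invocation, and the "every node is colored" claim is proved by the same counting identity $\sum_i \gamma_{v,i}=\sum_{x\in L_v}(d_v'(x)+1)-\sum_i\sum_{x\in L_{v,i}}a_v(x,i)$ combined with \eqref{eq:RemainingSlack}, the partition property of the $L_{v,i}$, and the defective-coloring degree bound. Your summing of $\gamma_{v,i}$ over all phases is algebraically equivalent to the paper's averaging argument (which picks the maximizing phase $i_v$), and your explicit remark on where the strict inequality comes from is a welcome extra bit of care.
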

\begin{proof}
  We assume that $v\in V_s$ for $s\in \set{1,\dots,Q}$. To show that $v$ gets colored, we have to show that $v$ is active in some phase $i$ and thus that there exists an $i$ for which $\gamma_{v,i}>S\cdot \deg_{G_s}(v)$. For the sake of contradiction, we assume that $\gamma_{v,i}\leq S\cdot \deg_{G_s}(v)$. Let $i_v$ be a phase $i$ that maximizes $\gamma_{v,i}$. We have
    \begin{align*}
        \gamma_{v, i_v} &\geq \frac{\sum_{i = 0}^{\lceil \log \Delta \rceil - 1} \gamma_{v, i}}{\lceil \log \Delta \rceil} \\
        &= \frac{\sum_{x \in L_v} (d_v'(x) + 1) - \sum_{i = 0}^{\lceil \log \Delta \rceil - 1} \sum_{x \in L_{v, i}} a_v(x, i)}{\lceil \log \Delta \rceil} \\
        &> \frac{ (S + 1) \cdot \deg(v) - \deg(v)}{\lceil \log \Delta \rceil} \\
        &\geq S \cdot \deg_{G_s}(v) 
    \end{align*}
    The third step follows from \cref{eq:RemainingSlack} and the fact that the $L_{v, i}$ form a partition of $L_v$ and thus $\sum_i \sum_{x \in L_{v, i}} a_v(x, i) \leq \deg(v)$. The last step holds by the properties of the defective $Q$-coloring that the algorithm computes in step $2$ (using \Cref{lemm:defColorBlackBox}), we have $deg_{G_s} \leq \deg(v)/\lceil \log \Delta \rceil$. We thus get a contradiction to the assumption that $\gamma_{v,i}\leq S\cdot \deg_{G_s}(v)$ for all phases $i$.
    
    To prove the claimed upper bound on the number of outneighbors with the same color, assume that a node $v\in V_s$ gets colored with color $x$ in step $s$ of some phase $i$. The set of outneighbors of $v$ that are colored with color $x$ consists of all neighbors of $v$ that are colored with color $x$ before step $s$ of phase $i$ and of the color-$x$ outneighbors $v$ gets in the instance of $P_A(S,C)$ in which $v$ gets colors. The edges to all neighbors that receive color $x$ in later steps and phases are oriented towards $v$. The number of neighbors of $v$ colored with color $x$ prior to step $s$ of phase $i$ is exactly $a_v(x,i)$. Further, because in the arbdefective coloring instance $P_A(S,C)$ that colors $v$, $v$ uses defect $\gamma_{v, i}(x)=d_v'(x)-a_v(x,i)$, $v$ gets at most $d_v'(x)-a_v(x,i)$ additional outneighbors of color $x$ in this step.
\end{proof}

\begin{lemma}
    \label{lemm:Neighbors}
    If the above algorithm colors a node $v\in V$ with color $x\in L_v$, then $v$ has at most $d_v(x)$ neighbors of color $x$.
\end{lemma}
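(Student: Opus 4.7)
The plan is to bound the color-$x$ neighbors of $v$ by splitting them according to the orientation of monochromatic edges. Let $A$ denote the set of neighbors of $v$ colored with $x$, and partition $A = A_{\mathrm{out}} \sqcup A_{\mathrm{in}}$, where $A_{\mathrm{out}}$ consists of the outneighbors of $v$ of color $x$ and $A_{\mathrm{in}} := A \setminus A_{\mathrm{out}}$. First, \Cref{lemm:WillBeColored} directly yields $|A_{\mathrm{out}}| \leq d_v'(x)$, so the bulk of the work is to bound $|A_{\mathrm{in}}|$.

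To bound $|A_{\mathrm{in}}|$, I would first observe that each $u \in A_{\mathrm{in}}$ must have been colored no earlier than $v$: either in a strictly later phase (so $i_u < i_v$), in the same phase but a later step, or in the same $P_A(S, C)$ instance as $v$. In all three cases the phase index satisfies $i_u \leq i_v$, so by the definition in \eqref{eq:DefectDPrime}, $d_u'(x) = 2^{i_u} - 1 \leq 2^{i_v} - 1 = d_v'(x)$. Restricting the algorithm's edge orientation to $G[A_{\mathrm{in}}]$, every $u \in A_{\mathrm{in}}$ has at most $d_u'(x) \leq d_v'(x)$ outneighbors within this subgraph, since any such outneighbor lies in $A_{\mathrm{in}} \subseteq A$ and therefore carries color $x$. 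An orientation of $G[A_{\mathrm{in}}]$ with maximum outdegree $d_v'(x)$ implies that $G[A_{\mathrm{in}}]$ is $(2 d_v'(x))$-degenerate and hence $(2 d_v'(x) + 1)$-colorable. Since $A_{\mathrm{in}} \subseteq N(v)$, the neighborhood-independence hypothesis gives $\alpha(G[A_{\mathrm{in}}]) \leq \theta$, and combining the independence and chromatic bounds yields $|A_{\mathrm{in}}| \leq \theta \cdot (2 d_v'(x) + 1)$.

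Adding the two parts gives $|A| \leq (2\theta + 1) \cdot d_v'(x) + \theta$. Finally, the definition \eqref{eq:DefectDPrime} ensures $d_v'(x) + 1 \leq \frac{d_v(x) + 1}{2(\theta + 1)}$, which after substitution gives
\[
    |A| \leq (2\theta+1)\bigl(d_v'(x)+1\bigr) - (\theta+1) \leq \frac{2\theta+1}{2\theta+2}\bigl(d_v(x)+1\bigr) - (\theta+1) \leq d_v(x) - \theta \leq d_v(x),
\]
as required. The main obstacle I anticipate is arguing cleanly that every $u \in A_{\mathrm{in}}$ satisfies $d_u'(x) \leq d_v'(x)$; this monotonicity is what makes the degeneracy-based chromatic bound on $G[A_{\mathrm{in}}]$ effective, and it relies on a careful case analysis of the three scenarios in which $u$ can be colored no earlier than $v$.
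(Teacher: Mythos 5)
Your proof is correct and follows essentially the same route as the paper's: bound the outneighbors via \Cref{lemm:WillBeColored}, use the phase ordering to get $d_u'(x)\leq d_v'(x)$ for inneighbors, and convert the bounded-outdegree orientation of $G[A_{\mathrm{in}}]$ into a degeneracy/chromatic bound combined with the neighborhood-independence bound on independent sets. The only (immaterial) difference is that the paper additionally observes that $v$ itself is one of the $\leq d_v'(x)$ color-$x$ outneighbors of each $u\in A_{\mathrm{in}}$, yielding outdegree $d_v'(x)-1$ within $G[A_{\mathrm{in}}]$ and hence a slightly tighter count, whereas you absorb the extra $+\theta$ in the final arithmetic, which still closes.
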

\begin{proof}
    By \Cref{lemm:WillBeColored}, node $v$ has at most $d_v'(x)$ outneighbors of color $x$. It therefore remains to count the maximum possible number of inneighbors of color $x$ that $v$ can get. Let $A$ be the set of inneighbors of $v$ that are colored with color $x$. For any $u\in A$, we know that $u$ is either colored in the same instance of the arbdefective coloring algorithm as $v$ or $u$ is colored after $v$. In both cases, $u$ is colored in the same phase as $v$ or in a later phase. This implies that $d'_u(x)\leq d'_v(x)$. Every color-$x$ inneighbor of $v$ therefore has at most $d'_v(x)$ outneighbors of color $x$. One of those outneighbors is $v$. Every node in $A$ thus has at most $d'_v(x)-1$ outneighbors in $A$ (as $v$ itself is not in $A$). This implies that the induced subgraph $G[A]$ of $G$ and every subgraph of $G[A]$ has average degree and thus also minimum degree at most $2d'_v(x)-2$. Hence, $G[A]$ has degeneracy $\leq 2d'_v(x)-2$ and can thus be properly colored with $2d'_v(x)-1\leq 2d_v'(x)$ colors. Because the neighborhood independence of $G$ (and thus also of $G[A]$) is at most $\theta$, in every proper coloring of $G[A]$, at most $\theta$ nodes can have the same color. We can therefore conclude that $|A|\leq 2\theta\cdot d'_v(x)$. Since there are at most $d'_v(x)$ outneighbors of $v$ that can have color $x$, $v$ has at most $2\theta\cdot d'_v(x) + d'_v(x)\leq 2(\theta+1)\cdot d'_v(x)$ neighbors of color $x$. This is at most $d_v(x)$ as by construction (\Cref{eq:DefectDPrime}), we have $d'_v(x)+1\leq \frac{d_v(x)+1}{2(\theta+1)}$.
\end{proof}

\restateforth*
\begin{proof}
  As proven in \Cref{lemm:WillBeColored,lemm:Neighbors}, the above algorithm correctly solves list defective coloring instances of the form $P_D(4\cdot(\theta+1)\cdot (S+1), C)$. Note that we have $S\geq 1$ and $\theta\geq 1$ and therefore $4(\theta+1)(S+1)\leq 16\theta S$. Our algorithm therefore in particular correctly solves list defective coloring instances of the form $P_D(16\theta S, C)$. The number of phases is $O(\log \Delta)$ and in each phase, we have $Q=O(\log^2\Delta)$ steps. The number of $P_D(S,C)$ instances that the algorithm needs to solve is therefore $O(\log^3\Delta)$. Finally, the algorithm applies the algorithm of \Cref{lemm:defColorBlackBox} once in order to compute the defective $Q$-coloring. With the initial proper $q$-coloring, this can be done in $O(\log^*q)$ rounds. This proves the claim on the round complexity $T_D(16\theta S, C)$.
\end{proof}
\newpage
\section{Analysis of the Recursive List Arbdefective Coloring Algorithm}
\label{sec:MainSection}

\Cref{lemma:SlackGeneration}, \Cref{lemm:colorSpaceReduction}, and \Cref{thm:ArbToDef} provide multiple ways to recursively express the complexity of list (arb)defective coloring instances. In this section, we combine those building blocks and choose appropriate parameters to obtain one recursive algorithm to solve list arbdefective coloring. The general structure of this recursive algorithm has already been discussed in \Cref{sec:recsubproblems} and is depicted in \Cref{fig:boundedNeighScheme}. While our main objective is to solve a $P_A(1,C)$ instance, our main recursive algorithm solves $P_A(2,C)$ instances. A $P_A(1,C)$-algorithm can then be obtained at an additional $O(\log\Delta)$ factor by using the first bound of \Cref{lemma:SlackGeneration}. Before discussing and analyzing the recursive algorithm in detail, we also provide an algorithm for the base case of the recursion, that is, we provide an algorithm that can solve instances $P_A(1,C)$ directly. For this, we use an algorithm that is provided in \cite{FK23arxiv} (which is the arxiv version of \cite{FK23}). We obtain the following lemma. 
\begin{lemma}[\cite{FK23arxiv}]\label{lemm:DefColoringBaseCase}
   There is a deterministic \LOCAL algorithm to solve list arbdefective coloring instances $P_A(1,C)$ on any graph $G$ that is equipped with an initial proper $q$-coloring in time
    \begin{align*}
        T_A(1, C) = O(\sqrt{C} \cdot \log^2 \Delta \cdot\sqrt{\log\Delta + \log\log C}\cdot \log^{3/2} \log \Delta + \log^* q).
    \end{align*}
\end{lemma}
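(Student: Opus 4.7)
The plan is to reduce the $P_A(1,C)$ instance to simpler subproblems by combining the building blocks already developed in the paper, but \emph{without} invoking the bounded neighborhood independence of $G$ (since the base case must hold in general graphs). First, apply the bound \eqref{eq:slackGeneration1} of \Cref{lemma:SlackGeneration} to reduce the given $P_A(1,C)$ instance to $O(\log\Delta)$ sequential instances of $P_A(2,C)$, paying an additive $O(\log^* q)$ preprocessing and $O(\log^*\Delta)$ per iteration. From now on it suffices to solve $P_A(2,C)$ within the target budget divided by $\log\Delta$.

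Second, for each $P_A(2,C)$ instance, apply \Cref{lemm:colorSpaceReduction} with $p=\lceil\sqrt{C}\rceil$ subspaces of size $\sqrt{C}$ and slack parameter $\sigma$ chosen slightly above $1$ (for instance $\sigma=1+1/\log\Delta$, so that the residual $P_A(2/\sigma,\sqrt{C})$ still has slack strictly greater than $1$). This produces one list defective coloring subproblem of type $P_D(\sigma,\sqrt{C})$ together with $\sqrt{C}$ disjoint residual instances of $P_A(2/\sigma,\sqrt{C})$ on much smaller color spaces, which can then be solved either by a shallow recursion into the same scheme or, once the color space is sufficiently small, by directly invoking a Linial-style algorithm. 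The $\sqrt{C}$ factor in the final round complexity originates from having to process $\sqrt{C}$ residual subproblems.

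Third, handle the list defective coloring subproblem $P_D(\sigma,\sqrt{C})$ directly, since in general graphs we cannot afford the $\Theta(\theta)$ slack overhead of \Cref{thm:ArbToDef}. The idea is to adapt the defect-class bucketing from \Cref{sec:arbtodef}: group the defect values $d_v(x)$ into $O(\log\Delta+\log\log C)$ geometrically spaced classes, and iterate through the classes in a Barenboim--Maus--Tonoyan style. In each class use \Cref{lemm:defColorBlackBox} to create a degree-reducing partition so that the slack of the within-class list problem exceeds $1$, and then sequentially color the $\sqrt{C}$ candidate colors of the class. The extra $\sqrt{\log\Delta+\log\log C}\cdot\log^{3/2}\log\Delta\cdot\log^2\Delta$ factor in the stated bound arises from the interaction of these $O(\log\Delta)$ defect classes, the $O(\log\Delta)$ degree-reduction phases from slack generation, and the $O(\log^*\Delta)$ overhead of each call to \Cref{lemm:defColorBlackBox}, optimized by balancing the number of classes against their widths.

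The main obstacle is the tightness of the slack budget. The starting slack is exactly $1$, so \Cref{lemma:SlackGeneration} must be applied before any color space reduction to create room, and inside \Cref{lemm:colorSpaceReduction} the parameter $\sigma$ must be chosen just barely above $1$ to keep $S/\sigma>1$ throughout. Simultaneously, $p=\sqrt{C}$ is dictated by balancing (i) the $p$ parallel residual instances with (ii) the difficulty of the induced $P_D(\sigma,p)$ subproblem; any significantly smaller choice of $p$ would blow up the residual color space, while a larger $p$ would make the list defective subproblem too hard. Carefully carrying out this trade-off, together with the class-and-phase accounting described above, yields the stated complexity; the full technical argument is given in \cite{FK23arxiv}.
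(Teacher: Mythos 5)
There is a genuine gap. This lemma is the \emph{base case} of the paper's recursion and must hold for \emph{general} graphs; the paper's proof is a direct application of two external results: it first computes a proper $O(\Delta^2)$-coloring via Linial's algorithm and then invokes Theorem~1.3 of \cite{FK23arxiv}, which solves slack-$1$ list arbdefective coloring in $O\big(\Lambda^{1/2}\kappa^{1/2}\log\Delta\cdot T^{O}_{1,\kappa}+\log^* q\big)$ rounds via a reduction to \emph{oriented} list defective coloring, together with Theorem~1.1 of \cite{FK23arxiv}, which gives $T^{O}_{1,\kappa}=O(\log\Delta)$ for $\kappa=O((\log\Delta+\log\log C)\log^3\log\Delta)$. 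Plugging in $\Lambda\leq C$ yields exactly the stated bound; in particular the factor $\sqrt{\log\Delta+\log\log C}\cdot\log^{3/2}\log\Delta$ is $\sqrt{\kappa}$ and has no counterpart in your accounting, which only gestures at "the interaction of classes and phases."

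The fatal step in your proposal is the third one. After the color space reduction you are left with a list \emph{defective} coloring instance $P_D(\sigma,\sqrt{C})$ with $\sigma$ barely above $1$, and you claim to solve it "directly" in a general graph by defect-class bucketing plus degree reduction. But the entire point of \Cref{thm:ArbToDef} (and of \Cref{sec:arbtodef}) is that converting arbdefective guarantees into defective guarantees costs a $\Theta(\theta)$ factor in slack and \emph{requires} bounded neighborhood independence: in a general graph, a node's monochromatic in-neighbors can be an arbitrarily large independent set, so controlling out-degrees does not control defects. No $f(\Delta)+O(\log^* n)$-round algorithm for list defective coloring with slack close to $1$ is known for general graphs --- the paper's conclusion explicitly notes that even non-list $d$-defective coloring with $o((\Delta/d)^2)$ colors is open in this regime --- so your step~3 assumes a solver at least as strong as anything the paper has available, making the argument circular. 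Two smaller inconsistencies: the $C^{\eps}$ residual instances of \Cref{lemm:colorSpaceReduction} live on disjoint color spaces and run in \emph{parallel}, so they do not contribute a $\sqrt{C}$ factor as you claim in step~2 (and you give a second, conflicting origin for that factor in step~3); and \eqref{eq:slackGeneration1} already costs a multiplicative $O(\log\Delta)$, which your budget never reconciles with the single $\log^2\Delta$ in the target bound. The correct route is simply to cite the oriented-list-defective-coloring machinery of \cite{FK23arxiv} as the paper does.
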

\begin{proof}
  We first compute a proper $O(\Delta^2)$-coloring of $G$ in $O(\log^* q)$ rounds by using a standard deterministic coloring algorithm of \cite{Linial1987}. In Theorem 1.3 of \cite{FK23arxiv}, it is shown that there is a deterministic \LOCAL algorithm to solve list arbdefective coloring instances with slack $1$ in time $O\big(\Lambda^{1/2}\cdot \kappa^{1/2}\cdot\log\Delta \cdot T_{1,\kappa}^{O} + \log^* q\big)$, where $\Lambda$ is the maximum list size, $\kappa>0$ is a parameter and $T_{1,\kappa}^O$ is the time to solve a so-called \emph{oriented list defective} coloring instance with parameters $1$ and $\kappa$.\footnote{Oriented list defective coloring is a variant of list defective coloring for directed graphs and where the defect is only measured w.r.t.\ the outneighbors of a node.} Theorem 1.1 in \cite{FK23arxiv} states that if the graph is equipped with a proper $O(\Delta^2)$-coloring, such problems can be solved in $O(\log\Delta)$ rounds for some $\kappa=O((\log\Delta + \log\log C)\cdot\log^3\log\Delta)$.
\end{proof}

We are now ready to prove \Cref{thm:mainNeighborhood}, our main result. Instead of proving a bound on $(\Delta+1)$-coloring as it is done in \Cref{thm:mainNeighborhood}, the following lemma proves the upper bound directly for list arbdefective coloring instances with slack $1$. Note that \Cref{thm:mainNeighborhood} directly follows from \Cref{lemm:mainNeighborhood} because $(\Delta+1)$-coloring is a special case of list arbdefective coloring with slack $1$. Also, in the case of $(\Delta+1)$-coloring, the size of the color space is $C=O(\Delta)$.

\begin{lemma}
  \label{lemm:mainNeighborhood}
  In $n$-node graphs of maximum degree $\Delta$ and neighborhood independence $\theta$, we have the following two bounds on $T_A(1,C)$.
  \begin{eqnarray}
    T_A(1,C) & = &  O\big(\theta^2\log^3\Delta\big)^{\frac{3\log\log C}{\log\log\log \Delta}+1} + O(\log^* n),\label{eq:mainBound1}\\
    T_A(1,C) & = & O\big(\theta^2\cdot C^{1/4} \cdot \log^6 \Delta
                   \cdot\sqrt{\log\Delta + \log\log C}\cdot \log^{3/2} \log \Delta + \log^* n\big).\label{eq:mainBound2}
  \end{eqnarray}
\end{lemma}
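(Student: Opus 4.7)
The plan is to combine \Cref{lemma:SlackGeneration}, \Cref{lemm:colorSpaceReduction}, and \Cref{thm:ArbToDef} into a single recurrence for $T_A(2, C)$, solve it once for each of two different choices of a parameter $\varepsilon \in (0,1)$, and finally apply \eqref{eq:slackGeneration1} to turn each resulting bound on $T_A(2, C)$ into one on $T_A(1, C)$ at the price of an extra $O(\log \Delta)$ factor. Before invoking any of the building blocks I would spend $O(\log^* n)$ rounds computing a proper $O(\Delta^2)$-coloring with Linial's algorithm, so that the various $O(\log^* q)$ additive terms appearing in the building blocks collapse into the final $O(\log^* n)$ summand.

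The master recurrence is obtained by chaining the three building blocks. Starting from $T_A(2, C)$, I would apply \eqref{eq:slackGeneration2} with $S = 64\theta$ to get $T_A(2, C) \leq O(\theta^2) \cdot T_A(64\theta, C)$, then \Cref{lemm:colorSpaceReduction} with $\sigma = 32\theta$ and $p = \lceil C^{\varepsilon}\rceil$ to obtain $T_A(64\theta, C) \leq T_D(32\theta, C^\varepsilon) + T_A(2, C^{1-\varepsilon})$, and finally \Cref{thm:ArbToDef} with $S = 2$ to replace $T_D(32\theta, C^\varepsilon)$ by $O(\log^3 \Delta) \cdot T_A(2, C^\varepsilon)$. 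This yields
\begin{equation*}
  T_A(2, C) \leq O(\theta^2) \cdot \bigl( \log^3 \Delta \cdot T_A(2, C^\varepsilon) + T_A(2, C^{1-\varepsilon}) \bigr),
\end{equation*}
with base case given by the trivial monotonicity $T_A(2, C') \leq T_A(1, C')$ combined with \Cref{lemm:DefColoringBaseCase}; for constant $C'$ this is $O(\log^{5/2}\Delta\cdot\log^{3/2}\log\Delta + \log^* n)$, which is dominated by $D := c\cdot\theta^2\log^3\Delta$ for a sufficiently large constant $c$.

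For \eqref{eq:mainBound2} I would set $\varepsilon = 1/2$, so the master recurrence collapses to $T_A(2, C) \leq O(\theta^2\log^3\Delta)\cdot T_A(2, \sqrt{C})$. Applying this once and plugging $\sqrt{C}$ directly into \Cref{lemm:DefColoringBaseCase} immediately yields the claimed $C^{1/4}$ dependency, and the final invocation of \eqref{eq:slackGeneration1} contributes the remaining $O(\log \Delta)$ factor to produce the bound on $T_A(1, C)$.

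For \eqref{eq:mainBound1} I would pick $\varepsilon = \Theta(\log\log\log\Delta/\log\log\Delta)$, so that $x := \log(1/\varepsilon) = \Theta(\log\log\log\Delta)$ and $y := \log(1/(1-\varepsilon)) = \Theta(\log\log\log\Delta/\log\log\Delta)$, and prove by induction on $m := \log\log C$ that $T_A(2, C) \leq D^{3m/\log\log\log\Delta + 1}$. The inductive step reduces to checking
\begin{equation*}
  c_1\theta^2\log^3\Delta \cdot D^{-3x/\log\log\log\Delta} + c_1\theta^2 \cdot D^{-3y/\log\log\log\Delta} \leq 1,
\end{equation*}
where the first term's exponent is $\approx -3$ and is dominated by $D^3$, whereas the second term requires the finer computation $\log D = \Theta(\log\theta+\log\log\Delta)$ to see that $D^{3/\log\log\Delta} = 2^{9+O(\log\theta/\log\log\Delta)}$, which absorbs $c_1\theta^2$ once the hidden constant in $D$ is large enough. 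The $+1$ in the exponent is what seeds the base case, and one final application of \eqref{eq:slackGeneration1} produces the claimed bound on $T_A(1, C)$. The main obstacle is precisely this second check: the $(1-\varepsilon)$-branch shrinks $m$ by only $\Theta(\varepsilon)$, so only a tiny fractional power of $D$ becomes available per recursive step to offset the $O(\theta^2)$ coefficient, and the order of $\varepsilon$ has to be calibrated so that the $\varepsilon$-branch reduces $m$ by $\Theta(\log\log\log\Delta)$ at each level while the $(1-\varepsilon)$-branch remains absorbable — in the complementary regime of $\theta$, where this calibration would fail, \eqref{eq:mainBound2} takes over through the minimum in \Cref{thm:mainNeighborhood}.
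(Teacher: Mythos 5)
Your overall route is the same as the paper's: chain \Cref{lemma:SlackGeneration}, \Cref{lemm:colorSpaceReduction}, and \Cref{thm:ArbToDef} into the recurrence $T_A(2,C)\le O(\theta^2)\big(\log^3\Delta\cdot T_A(2,C^{\varepsilon})+T_A(2,C^{1-\varepsilon})\big)$, solve it with $\varepsilon=1/2$ (feeding $\sqrt{C}$ into \Cref{lemm:DefColoringBaseCase}) for \eqref{eq:mainBound2} and with $\varepsilon\approx\log\log\log\Delta/\log\log\Delta$ by induction on $\log\log C$ for \eqref{eq:mainBound1}, then pay one $O(\log\Delta)$ factor via \eqref{eq:slackGeneration1}. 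Your constant bookkeeping ($S=64\theta$, $\sigma=32\theta$, so that \Cref{thm:ArbToDef} applies with $S=2$ and yields $T_A(2,C^{\varepsilon})$ exactly) is in fact cleaner than the paper's, which invokes \Cref{thm:ArbToDef} on $T_D(16\theta,C^{\varepsilon})$ yet writes $T_A(2,\cdot)$ rather than $T_A(1,\cdot)$ on the right-hand side. The derivation of \eqref{eq:mainBound2} is complete.

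For \eqref{eq:mainBound1}, however, the difficulty you flag in the $(1-\varepsilon)$-branch is a genuine gap in your argument, not a removable nuisance: the available factor is $D^{\Theta(1/\log\log\Delta)}=2^{\Theta(1)+\Theta(\log\theta/\log\log\Delta)}$, and no choice of the hidden constant $c$ in $D$ makes this exceed $\Omega(\theta^2)$ once $\theta=\omega(1)$, because $c$ only enters as $2^{O(\log c/\log\log\Delta)}\to 1$. Deferring to \eqref{eq:mainBound2} ``through the minimum in \Cref{thm:mainNeighborhood}'' does not discharge the obligation, since \Cref{lemm:mainNeighborhood} asserts \eqref{eq:mainBound1} unconditionally for all $\theta$; as written, your induction only establishes it for $\theta=O(1)$. (For what it is worth, the paper's own \Cref{lemma:techlemma1} silently drops the $\alpha\theta^2$ prefactor on the $T_A(2,C^{1-\varepsilon})$ term in the second line of its induction step, which is exactly the point where your more careful accounting gets stuck; so you have located a real subtlety rather than invented one, but you have not closed it.) A secondary, smaller issue: because you spend the ``$+1$'' in the exponent on seeding the base case, the final multiplication by $O(\log\Delta)$ from \eqref{eq:slackGeneration1} leaves you with roughly $D^{e+4/3}$ rather than the stated $D^{e+1}$; the paper instead proves the $T_A(2,C)$ bound without the $+1$ and uses the extra power of $D$ to absorb that last $\log\Delta$ factor.
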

\begin{proof}
  In the following proof, we assume that we are given a proper $q = O(\Delta^2)$-coloring of $G$ (which can be computed in $O(\log^* n)$ rounds by using a deterministic algorithm of \cite{Linial1987}). As discussed above, we provide a recursive solution for $P_A(2,C)$ and we then use the first bound of \Cref{lemma:SlackGeneration} to obtain a bound on $T_A(1,C)$. Starting from $P_A(2,C)$, the first task is to generate slack $S := 32 \cdot \theta$, which can be achieved by using the second bound of \Cref{lemma:SlackGeneration}. Then, we solve this arbdefective coloring problem using color space reduction as of \Cref{lemm:colorSpaceReduction} with parameters $\sigma := S/2$ and $p := C^{\eps}$ for some appropriate value of $\eps>0$.\footnote{For simplicity, we assume that $C^{\eps}$ and $C^{1-\eps}$ are integers.} We then utilize \Cref{thm:ArbToDef} to solve the list defective coloring part of the color space reduction. This results in the following recursive formula for $T_A(2,C)$.
  \begin{align} 
    T_A(2, C) &\leq O(\theta^2) \cdot T_A \left(32 \cdot \theta, C \right) + O(\log^* \Delta) \notag \\
              &\leq O(\theta^2) \cdot \left( T_D\big(16 \theta, C^\eps\big) + T_A\big(2, C^{1 - \eps}\big)\right) + O(\log^* \Delta)  \notag \\
              &\leq O(\theta^2) \cdot \left( \log^3 \Delta \cdot T_A\big(2, C^{\eps}\big) + T_A\big(2, C^{1 - \eps}\big)\right)  \label{eq:splitting}
  \end{align}
  Note that when applying \Cref{thm:ArbToDef} in the last inequality, we obtain an additional additive  $O(\theta^2\cdot\log^*\Delta)$ term. This term and the $O(\log^*\Delta)$ term from the first two lines are however clearly asymptotically dominated by $O(\theta^2\cdot\log^3\Delta\cdot T_A(2, C^{\eps}))$. In order to obtain \eqref{eq:mainBound1}, we set $\eps=\frac{\log\log\log\Delta}{\log\log C}$. One can then use induction to show the following bound.
  \begin{equation}\label{eq:mainBound1Slack}
    T_A(2,C) =  O\big(\theta^2\log^3\Delta\big)^{\frac{3\log\log C}{\log\log\log \Delta}}.
  \end{equation}
  As the necessary algebraic manipulations are a bit lengthy, the proof of this appears in \Cref{lemma:techlemma1} in \Cref{sec:technicalProofs}. When combining \eqref{eq:mainBound1Slack} with the first bound of \Cref{lemma:SlackGeneration}, we get
  \[
    T_A(1,C) = O(\log\Delta)\cdot O\big(\theta^2\log^3\Delta\big)^{\frac{3\log\log C}{\log\log\log \Delta}} + O(\log\Delta\cdot\log^*\Delta) =
     O\big(\theta^2\log^3\Delta\big)^{\frac{3\log\log C}{\log\log\log \Delta}+1}.
  \]
  Together with the $O(\log^* n)$ rounds to compute the initial proper $O(\Delta^2)$-coloring, this prove the first bound of the lemma.
  To obtain the second bound \eqref{eq:mainBound2} of the lemma, we set $\eps=1/2$. We then only apply the recursive definition of \eqref{eq:splitting} once before we directly apply the bound of \Cref{lemm:DefColoringBaseCase}. We then have
  \begin{eqnarray*}
    T_A(2,C) & \leq  & O(\theta^2) \cdot \left( \log^3 \Delta \cdot T_A\big(2, C^{\eps}\big) + T_A\big(2, C^{1 - \eps}\big)\right)\\
             & \stackrel{(\eps=1/2)}{\leq}
                     & O(\theta^2\log^3\Delta)\cdot T_A\big(2,\sqrt{C}\big)\\
             & \leq &
                      O(\theta^2\log^3\Delta)\cdot O\big(C^{1/4} \cdot \log^2 \Delta \cdot\sqrt{\log\Delta + \log\log C}\cdot \log^{3/2} \log \Delta\big).
  \end{eqnarray*}
  Together with the bound $T_A(1,C)\leq O(\log\Delta)\cdot T_A(2,C) + O(\log\Delta \cdot\log^* \Delta)$ and the $O(\log^* n)$ time bound to compute the initial proper $O(\Delta^2)$-coloring, this directly implies the second bound \eqref{eq:mainBound2} of the lemma.
\end{proof}

\subsection{Technical Part of The Main Theorem Proof}
\label{sec:technicalProofs}

In this section we provide a lemma that contains the technical calculations that are needed for the recursion analysis to obtain the main result \Cref{thm:mainNeighborhood}. More specific, we prove \Cref{lemma:techlemma1}, which is used inside the proof of \Cref{lemm:mainNeighborhood}.

\begin{lemma} 
    \label{lemma:techlemma1}
    Assume that the following recursion relation holds (cf.\eqref{eq:splitting}). If an initial proper $O(\Delta^2)$-coloring
    is provided, in graphs of neighborhood independence at most
    $\theta$, we have
    \begin{equation}\label{eq:splitting2}
        T_A(2, C) = O(\theta^2) \cdot \left( \log^3 \Delta \cdot T_A(2, C^{\eps}) + T_A(2, C^{1 - \eps}) \right)
    \end{equation}
    We can then choose $\eps$ to obtain
    \begin{align*}
        T_A(2, C) \leq  \left(\alpha \cdot \theta^2 \cdot \log^3 \Delta \right)^{\frac{3 \cdot \log\log  C}{\log \log \log \Delta }}
    \end{align*}
    for some constant $\alpha > 0$. 
\end{lemma}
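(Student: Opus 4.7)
The plan is to prove the claim by strong induction on $C$, aiming at the bound $T_A(2,C) \le K^{h(C)}$ with $K := \alpha\theta^2\log^3\Delta$, $h(C) := 3\log\log C/L$, and $L := \log\log\log\Delta$. At each invocation of the recursion \eqref{eq:splitting2} I would set $\eps := L/\log\log C$, so that $\eps$ adapts to the current instance as dictated by the enclosing proof of \Cref{lemm:mainNeighborhood}.

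The base case handles $C$ so small that $\log\log C$ is at most some fixed multiple of $L$. In that range the target $K^{h(C)}$ is already $K^{O(1)} = \mathrm{poly}(\theta,\log\Delta)$, and I would invoke \Cref{lemm:DefColoringBaseCase} on $P_A(1,C)$ (using the trivial $T_A(2,C) \le T_A(1,C)$) to obtain $\sqrt{C}\cdot\mathrm{polylog}(\Delta)$, which fits inside the target for $\alpha$ sufficiently large.

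For the inductive step I would apply \eqref{eq:splitting2} with $\eps = L/\log\log C$ and substitute the IH on the two subproblems. The core identities are
\begin{align*}
  h(C^\eps) &= h(C) - \tfrac{3}{L}\bigl(\log\log\log C - \log L\bigr),\\
  h(C^{1-\eps}) &\le h(C) - \tfrac{3\eps}{L\ln 2} = h(C) - \tfrac{3}{\log\log C\cdot \ln 2},
\end{align*}
so after using $c_1\theta^2\log^3\Delta = K/\alpha$ for a universal constant $c_1$, the induction reduces to verifying
\[
  \frac{K^{1 - 3(\log\log\log C - \log L)/L}}{\alpha}\;+\;c_1\theta^2\cdot K^{-3/(\log\log C\,\ln 2)}\;\le\;1.
\]

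The main obstacle is to establish this inequality uniformly over all $C$ that arise during the recursion, and to pin down a choice of $\alpha$ (independent of $\Delta$ and $\theta$) that makes both summands at most $1/2$. In the regime $\log\log C \gg L$ the exponent of $K$ in the first summand is non-positive, making that summand trivially small; the second summand requires $K^{3/(\log\log C\,\ln 2)} \ge 2c_1\theta^2$, which essentially pins down the range of $\theta$ for which the first bound of \Cref{thm:mainNeighborhood} is meaningful (and for which the lemma even makes sense as the dominant branch of the $\min$). In the complementary regime, where $\log\log C$ is only mildly larger than $L$ and the left-branch decrement fails to compensate for the extra factor of $K$, the cleanest route is to enlarge the base case to cover this entire range via \Cref{lemm:DefColoringBaseCase}; alternatively, one can amortize the bookkeeping by unrolling several recursion levels before invoking the IH, absorbing the per-step slackness into the total decrement.
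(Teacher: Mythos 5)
Your setup coincides with the paper's: induction on the color space with $\eps=\log\log\log\Delta/\log\log C$, target $K^{h(C)}$ for $K=\alpha\theta^2\log^3\Delta$, and your two decrement identities for $h(C^{\eps})$ and $h(C^{1-\eps})$ are exactly what the paper's Claims~I and~II are meant to establish (each summand bounded by $\tfrac12 K^{h(C)}$). The gap is that you stop precisely where the work is: you display the inequality
\[
  \frac{K^{1-3(\log\log\log C-\log L)/L}}{\alpha}+c_1\theta^2\, K^{-3/(\log\log C\,\ln 2)}\le 1,
  \qquad L:=\log\log\log\Delta,
\]
call it ``the main obstacle,'' and never verify it. And it genuinely fails in an intermediate regime that the recursion must traverse. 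Concretely, take $C=2^{(\log\log\Delta)^2}$, so $\log\log C=2L$, well above your base case $\log\log C=O(L)$. Then $\log\log\log C-\log L=1$, the first summand is $K^{1-3/L}/\alpha$, and since $K^{3/L}=2^{O(\log\log\Delta/L)}\ll K=2^{\Theta(\log\log\Delta)}$, this summand is $\gg 1$ for large $\Delta$: the left branch's exponent decrement is $o(1)$ per step there, while a full factor of $K$ is paid per step. The first summand only becomes $\le 1/2$ once $\log\log\log C-\log L\geq L/3$, i.e.\ $\log\log C\gtrsim L\cdot 2^{L/3}$, leaving a wide window in which the induction step as written is false.

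Neither of your remedies closes this window. Enlarging the base case via \Cref{lemm:DefColoringBaseCase} cannot work: already at $C=2^{(\log\log\Delta)^2}$ that lemma costs $\sqrt{C}=2^{(\log\log\Delta)^2/2}$, whereas the target is $K^{6}=2^{O(\log\log\Delta+\log\theta)}$, and the mismatch only worsens toward the top of the window, where $\log\sqrt C=\tfrac12(\log\log\Delta)^{(\log\log\Delta)^{1/3}}$ dwarfs the target's logarithm $O((\log\log\Delta)^{1/3}\log K)$. ``Unrolling several levels'' is left entirely unexecuted, and it is not apparent how amortization over a bounded number of levels recovers an $\Omega(1)$ exponent decrement per paid factor of $K$ in this window. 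You have in fact put your finger on the delicate points of the argument --- the paper's Claim~I silently replaces $\log\log\log C$ by $\log\log\log\Delta$ (valid only near the top of the recursion), and its induction step drops the $\theta^2$ prefactor on the right branch, which is the constraint you observe ``pins down the range of $\theta$'' --- but identifying the difficulties is not resolving them, and as written your proposal leaves the induction unverified exactly where it is hardest.
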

\begin{proof}
    From \eqref{eq:splitting2} we can directly see that 
    \begin{align*}
        T_A(2, C) \leq \alpha \cdot \theta^2 \cdot \left( \log^3 \Delta \cdot T_A(2, C^{\eps}) + T_A(2, C^{1- \eps}) \right) 
    \end{align*}
    holds for some constant $\alpha>0$. In the following, we assume
    that $\alpha\geq 2$. We show by induction that by setting $\eps := \frac{\log \log \log \Delta}{\log\log C}$, the above recursion leads to the desired bound. We start with the base case where the color space is small, say $C\leq \log \Delta$. Then, by \Cref{lemm:DefColoringBaseCase} we can compute a list arbdefective coloring in time $O(\log^4 \Delta)$.
    
    Before we proceed with the induction step, we show the following two helpful claims.

    \noindent \textbf{Claim I:} For every $C > 4$ we have:
    \begin{align*}
        (\alpha \theta^2 \cdot \log^3 \Delta)^{3 \cdot \frac{\log \log C^{\eps}}{\log \log \log \Delta} + 1} < \frac{1}{2} \cdot (\alpha \theta^2 \cdot \log^3 \Delta)^{3 \cdot \frac{\log \log C}{\log \log \log \Delta}}
    \end{align*}
    Observe the following inequality.
    \begin{align*}
        3 \frac{\log \log C^{\eps}}{\log \log \log \Delta} + 1 &= \frac{-3 \log (1/\eps) + 3\log \log C}{\log \log \log \Delta} + 1 \\
        &= \frac{-3 \log\log \log \Delta + 3\log \log \log \log \Delta + 3 \log \log C}{\log \log \log \Delta} + \frac{\log \log \log \Delta}{\log \log \log \Delta} \\
        &= 3\cdot \frac{\log \log C}{\log \log \log \Delta} - \frac{2\log\log \log \Delta - 3 \log \log \log \log \Delta}{\log \log \log \Delta} \\
        &\leq 3\cdot \frac{\log \log C}{\log \log \log \Delta} - \frac{1}{3}
    \end{align*}
    Note that the last step holds as for all $C \geq 4$. Hence, the statement of Claim I can be deduced from the following line:
    \begin{align*}
        (\alpha \theta^2 \cdot \log^3 \Delta)^{3 \cdot \frac{\log \log C^{\eps}}{\log \log \log \Delta} + 1} \leq \frac{(\alpha \theta^2 \cdot \log^3 \Delta)^{3 \cdot \frac{\log \log C}{\log \log \log \Delta}}}{(\alpha \theta^2 \cdot \log^3 \Delta)^{\frac{1}{3}}} < \frac{(\alpha \theta^2 \cdot \log^3 \Delta)^{3 \cdot \frac{\log \log C}{\log \log \log \Delta}}}{2}
    \end{align*}
    The bound $(\alpha\theta^2\log^3\Delta)^{1/3}\geq 2$ follows because we have $\alpha\geq 2$ and we can w.l.o.g.\ assume that $\Delta\geq 2$ and $\theta\geq 2$. The only graphs with neighborhood independence $1$ are complete graphs. If $\Delta=1$ or $\theta=1$, we therefore have a graph with diameter $1$, which is trivial to color in the \LOCAL model.
    
    We are now ready to show the second claim:

    \noindent \textbf{Claim II:} For every $C > 4$ we have:
    \begin{align*}
        (\alpha \theta^2 \cdot \log^3 \Delta)^{3 \cdot \frac{\log \log C^{1 - \eps}}{\log \log \log \Delta} } < \frac{1}{2} \cdot (\alpha \theta^2 \cdot \log^3 \Delta)^{3 \cdot \frac{\log \log C}{\log \log \log \Delta}}
    \end{align*}

    In the following, we use that for the logarithm to base $2$, for all $x \in [0, 1]$: $\log(1 - x) \leq - x$.
    \begin{align*}
        3 \cdot \frac{\log \log C^{1 - \eps}}{\log \log \log \Delta} &= 3 \frac{\log(1-\eps) + \log \log C}{\log \log \log \Delta} \\
        &\leq \frac{3 \log\log C}{\log \log \log \Delta} - \frac{3 \cdot \eps}{\log \log \log \Delta} \\
        &=  \frac{3 \log\log C}{\log \log \log \Delta} - \frac{3}{\log \log \Delta}
    \end{align*}
    The statement of Claim II will follow by plugging in $(\log \Delta)^{(1/\log \log \Delta)} = 2$ as follows:
    \begin{align*}
        (\alpha \theta^2 \cdot \log^3 \Delta)^{3 \cdot \frac{\log \log C^{1 - \eps}}{\log \log \log \Delta} } \leq \frac{(\alpha \theta^2 \cdot \log^3 \Delta)^{3 \cdot \frac{\log \log C}{\log \log \log \Delta}}}{(\alpha \theta^2 \cdot \log^3 \Delta)^{\frac{3}{\log \log \Delta}}} < \frac{(\alpha \theta^2 \cdot \log^3 \Delta)^{3 \cdot \frac{\log \log C}{\log \log \log \Delta}}}{2}
    \end{align*}

    With both claims shown, we now proceed with the induction step. For that, assume that the statement of the lemma holds for any color space $<C$, and thus, especially for $C^{\eps}$ and $C^{1-\eps}$.
    \begin{align*}
        T_A(2, C) &\leq \alpha \theta^2 \cdot \left( \log^3 \Delta \cdot T_A(2, C^\eps) + T_A(2, C^{1-\eps}) \right) \\
        &\leq (\alpha \theta^2 \cdot \log^3 \Delta)^{3 \cdot \frac{\log \log C^{\eps}}{\log \log \log \Delta} + 1} + (\alpha \theta^2 \cdot \log^3 \Delta)^{3 \cdot \frac{\log \log C^{1- \eps}}{\log \log \log \Delta}} \\
        &< \frac{1}{2} \cdot (\alpha \theta^2 \cdot \log^3 \Delta)^{3 \cdot \frac{\log \log C}{\log \log \log \Delta}} + \frac{1}{2} \cdot (\alpha \theta^2 \cdot \log^3 \Delta)^{3 \cdot \frac{\log \log C}{\log \log \log \Delta}} \tag{By Claim I and II} \\
        &= (\alpha \theta^2 \cdot \log^3 \Delta)^{3 \cdot \frac{\log \log C}{\log \log \log \Delta}}.
   \end{align*}
\end{proof}
\section{Lower Bound on Defective Edge Coloring of Hypergraphs}
\label{sec:lowerbound}

In this section we prove the lower bound on defective $2$-edge-coloring as stated in \Cref{thm:lowerbound}.

Our proof is based on a lower bound for defective $2$-vertex coloring of trees given in \cite{BalliuHLOS19}. Here, the authors introduce the \DCLOCAL\ model, which is an adaptation from the \LOCAL model.

\begin{definition}[\DCLOCAL$(k, c)$]\label{def:DC-LOCAL}
  The \DCLOCAL$(k,c)$ model for integer parameters $k\geq 1$, $c\geq 2$ is identical to \LOCAL model with the following adaptation. The nodes do not have unique identifiers. Instead, they are initially equipped with a proper distance-$k$ coloring with colors $\set{1,\dots,c}$.
\end{definition}

It is well-known that locally checkable labeling (LCL) problems (in bounded-degree graphs) that can be solved in $O(\log^* n)$ rounds in \LOCAL, can be solved deterministically in $k$ rounds in the \DCLOCAL$(k,c)$ model for some parameters $k=O(1)$ and any $c=O(1)$~\cite{chang16exponential}. It is further known that every LCL problem that can be deterministically solved in $o(\log n)$ rounds, can also be solved deterministically in $O(\log^* n)$ rounds in the \LOCAL model. In order to show that $(r(d-1)-d)$-defective edge coloring with $2$ colors requires $\Omega(\log n)$ rounds in $d$-regular, $r$-uniform hypergraphs (for any constants $d$, $r$), it is therefore equivalent to show that there is no \DCLOCAL$(k,c)$ for any $k=O(1)$ that solves the same problem in $O(1)$ rounds. As in \cite{BalliuHLOS19}, we prove this by a reduction that shows that given some $O(1)$-round \DCLOCAL$(k,c)$ algorithm to compute the required defective edge coloring implies an $O(\log^* n)$-round algorithm to compute a sinkless orientation in $2$-colored regular bipartite graphs, which is in contradiction to the sinkless orientation lower bound of \cite{ChangHLPU18}.

Before we get to the actual lower bound, we start with some terminology. First, we define the notion of a \emph{$(r,d)$-regular tree} as follows
\begin{definition}[$(r, d)$-regular tree]\label{def:rd-regular}
    We call a tree $T$ a $(r, d)$-regular tree for some integers $r,d > 1$ if in every even layer (assuming the root of the tree is in layer $0$), each non-leaf node has degree exactly $r$ and in every odd layer, each non-leaf node has degree exactly $d$.
\end{definition}
We further define the \emph{bipartite representation} of a hypergraph as follows.
\begin{definition}[bipartite hypergraph representation]\label{def:biprepresentation}
    Given a hypergraph $H=(V,E)$ with nodes $V$ and hyperedges $E\subseteq 2^V$, the \emph{bipartite representation}  of $H$ is a bipartite graph $B=(V_B, E_B)$ that contains one node for each node $v\in V$ of $H$ and for each edge $e\in E$ of $H$ (i.e., $V_B=V\cup E$) and where there is an edge $\set{v,e}\in E_B$ for $v\in V$ and $e\in E$ if and only if $v\in e$.
\end{definition}

The following lemma proves a structural property for $(r(d-1)-d)$-defective $2$-edge colorings of $d$-regular $r$-uniform hypergraphs.
  \begin{lemma}
    \label{lemma:regularTreeProperty}
    Let $H=(V,E)$ be a $d$-regular, $r$-uniform hypergraph for $d,r\geq 2$. If the edges $E$ are colored with two colors such that every edge has at most $(r(d-1)-d)$ neighboring edges of the same color, the following holds for any edge $e\in E$ and any integer $t\geq 0$. If in the bipartite representation of $H$, the $4t$-hop neighborhood of $e$ has no cycles and no leaves, then there exists a node (representing an edge $e'\in E$) at distance exactly $4t$ from $e$ such that $e$ and $e'$ have the same color.
\end{lemma}
\begin{proof}
  Let $e$ be the node in $H$ as depicted in the lemma and let $e$ be colored with $x \in \{black, white\}$. As described in the statement, the $4t$-hop view of $e$ is a complete $(r,d)$-regular tree. Consider this tree as a rooted tree, rooted at node $e$. We need to show that in this rooted tree, there is a node $e'$ that is colored with color $x$ in layer $4t$ (we say that $e$ itself is in layer $0$).
  
  As $e$ is the root of the tree, it has exactly $r \cdot (d-1)$ distance-$2$ neighbors, which are all in layer $2$ of the tree. The number of those nodes with color $\overline{x}$ (inverse color of $x$) has to be $\geq r \cdot (d-1)- (r(d-1)-d) = d \geq 2$. We can thus guarantee the existence of some node $e_2$ in distance exactly $2$ of $e$ with color $\overline{x}$. We will now use a similar argument to show that in the subtree of $e_2$, there is at least one node with color $x$ in distance $2$ from $e_2$. Node $e_2$ has $r-1$ children and thus $(r-1) \cdot (d-1)$ grandchildren (with respect to this tree rooted at $e$). If all grandchildren are colored with color $\overline{x}$, we run into a contradiction with the promise that at most $(r(d-1)-d)$ of the $2$-hop neighbors have the same color as $e_2$. Note that $r(d-1)-d < (r-1) \cdot (d-1)$. Hence, at least one of the grandchildren of $e_2$ must be colored with color $x$. This argument can be made for every node on an even layer (i.e., for every degree-$r$ node). Every such node must have at least one grandchild of the opposite color. This however implies that for every node at an even layer $2\ell$, there is at least one node on layer $2\ell+4$ (as long as $2\ell+4\leq 4t$) with the same color. This proves the claim of the lemma.
\end{proof}

\begin{lemma}
    \label{lemma:lower_reduction}
    Assume the existence of some algorithm $\calA$ that computes a $(r(d-1)-d)$-defective $2$-edge coloring of $r$-uniform hypergraphs of maximum degree $d$ in $k$ rounds in the \DCLOCAL$(k, (r\cdot d)^{k})$ model for some constant $k\geq 1$, then one can also solve sinkless orientations on $\Delta := r \cdot (d-1)^{2k} \cdot (r-1)^{2k-1}$-regular $2$-colored trees in $O(\log^* n)$ rounds in \LOCAL.  
\end{lemma}
\begin{proof}
  Consider a $\Delta$-regular $2$-colored tree $B = (U \cup V, E)$ in the \LOCAL model, where $U$ and $V$ represent the set of nodes belonging to the two color classes. Now, we transform $B$ into a $(r,d)$-regular tree as follows. Each node $w \in U \cup V$ constructs a virtual tree rooted at $w$. This virtual tree is a $(r, d)$-regular tree as defined in \Cref{def:rd-regular} of height exactly $4k$. We denote this virtual tree by $T_{virt}(w)$. Note that each such tree has exactly $\Delta$ leaves.
  
  Consider two nodes $w_1\in U$ and $w_2\in V$ such that $\set{w_1,w_2}\in E$ and such that $w_2$ is the neighbor of $w_1$ on port $i$ and $w_1$ is the neighbor of $w_2$ on port $j$ (for $i,j\in \{1,\dots\Delta \})$. We then merge the $i$-th leaf of $T_{virt}(w_1)$ with the $j$-th leaf of $T_{virt}(w_2)$ into a merged node that we call $m_{w_1, w_2}$. This results in a virtual tree that is $(r,d)$-regular except for those merged nodes, which have degree $2$ (instead of $r$). In order to make the tree $(r,d)$-regular, we attach $r-2$ ``dummy nodes'' to each merged node. Note that this $(r,d)$-regular tree is the bipartite representation of some $r$-uniform hypergraph of maximum degree $d$. We can therefore apply our defective edge coloring algorithm $\calA$ to $2$-color the degree-$r$ nodes with $2$ colors such that every degree-$r$ node has at least $d$ distance-$2$ neighbors of the opposite color. In order to do this, we however first have to equip the simulated $(r,d)$-regular tree with a proper distance-$k$ coloring with $(r\cdot d)^{2k}$. We further want to construct this distance-$k$ so that for every $w\in U$, the root node of the tree $T_{virt}(w)$ is colored black and for every node $w\in V$, the root node of the tree $T_{virt}(w)$ is colored white. Note that the distance-$k$ neighborhoods of the roots of different virtual trees are disjoint and more than $k$ hops apart from each other. Each node $w\in U\cup V$ can therefore independently determine the distance-$k$ coloring of its $k$-hop neighborhood. As the maximum number of distance-$k$ neighbors in a $(r,d)$-regular tree is less than $(r\cdot d)^{2k}$, this partial proper distance-$k$ coloring of the virtual $(r,d)$-regular tree can then be completed in $O(\log^* n)$ rounds by a standard distributed coloring algorithm~(e.g., \cite{barenboim14distributed}). Also note that since all the virtual trees are complete $(r,d)$-regular trees up to distance $k$ (in fact, up to distance $4k$). For such neighborhoods, there clearly must exist initial distance-$k$ colorings that force the root to be black and distance-$k$ coloring that force the root to be white (note that the distance-$2$ neighbors of the root nodes see the same topology). We can therefore force all the nodes on the $U$-side of the simulated $(r,d)$-regular tree to be colored black and all nodes on the $V$-side of the simulated $(r,d)$-regular tree to be colored white.

  By \Cref{lemma:regularTreeProperty}, the computed $2$-coloring of the degree-$r$ nodes has the following property. For every node $w\in U\cup V$, in the tree $T_{virt}(w)$, at least one node at distance $4k$ from the root has the same color as the root. The nodes at distance $4k$ from the root are the merged leaf nodes of the virtual trees. For each $w\in U$, we can therefore conclude that at least one of its merged leaf nodes is black and for each $w\in V$, we can conclude that at least one of its merged leaf nodes is white. We can use this property to compute a sinkless orientation of $2$-colored bipartite graph $B=(U\cup V, E)$ from which we started. Edges between degree-$\Delta$ and degree-$1$ nodes in $B$ are oriented towards the degree-$1$ node (degree-$1$ nodes are allowed to be sinks). Edges $\set{w_1,w_2}\in E$ with $w_1\in U$ and $w_2\in V$ and where $\deg_B(w_1)=\deg_B(w_2)=\Delta$ are oriented as follows. If the merged node $m_{w_1,w_2}$ is colored black, the edge is oriented from $w_1$ to $w_2$ and if $m_{w_2,w_1}$ is colored white, the edge is oriented from $w_2$ to $w_1$. Because $T_{virt}(w_1)$ is guaranteed to have at least one black leaf and $T_{virt}(w_2)$ is guaranteed to have at least one white leaf, this guarantees that every degree-$\Delta$ node of $B$ receives at least one outgoing edge (and thus is not a sink). We can therefore conclude that if the defective $2$-edge coloring algorithm $\calA$ exists as claimed, then a sinkless orientation of bipartite $2$-colored $\Delta$-regular trees can be computed in $O(\log^* n)$ rounds in \LOCAL. 
\end{proof}

Based on \Cref{lemma:lower_reduction}, the proof of \Cref{thm:lowerbound} is now straightforward. For completeness, we first restate the theorem.

\restateLowerBound*

\begin{proof}
  For a $\Delta$-regular tree, we call an orientation of the edges a sinkless orientation if every degree-$\Delta$ node has at least one outgoing edge. From \cite{ChangHLPU18}, it is known that for $\Delta=O(1)$, deterministically computing a sinkless orientation in $2$-colored $\Delta$-regular trees requires $\Omega(\log n)$ rounds in the \LOCAL model. As discussed above, from the results in \cite{chang16exponential}, we further know that in bounded-degree graphs, any LCL problem (which includes defective coloring problems) can be solved deterministically in $o(\log n)$ rounds if and only if there exists a $k$-round $\DCLOCAL(k,c)$ for some constant $k\geq 1$ and any constant $c\geq1$ for the problem. \Cref{lemma:lower_reduction} therefore implies that if there exists an $o(\log n)$-round deterministic algorithm to compute an $(r(d-1)-d)$-defective $2$-edge coloring of $r$-uniform hypergraphs of maximum degree $d$, then there is an $O(\log^* n)$-round deterministic algorithm to compute a sinkless orientation in $2$-colored $\Delta$-regular trees. This is in contradiction to the lower bound of \cite{ChangHLPU18}.
\end{proof}
\newpage

\section{Conclusions}
In this paper, we showed that in graphs of neighborhood independence $\theta\leq\poly\log\Delta$, one can solve
$(\Delta+1)$-coloring in $(\log\Delta)^{O\big(\frac{\log\log\Delta}{\log\log\log\Delta}\big)}+O(\log^* n)$. We also showed that it suffices to have $\theta\leq\Delta^{1/8-\eps}$ in order to get a $(\Delta+1)$-coloring algorithm that is polynomially faster than the current best algorithm for general graphs. One natural open question is whether we can improve the complexity for graphs with $\theta=O(1)$ or $\theta=\poly\log\Delta$ from quasipolynomial in $\log\Delta$ to $\poly\log\Delta$. It is known that this is possible for the special case of $(2\Delta-1)$-edge coloring~\cite{BalliuBKO22}. Our lower bound however shows that the approach of \cite{BalliuBKO22} already fails for edge coloring $3$-uniform hypergraphs with $3\Delta-2$ colors. Another natural open question is whether $o(\sqrt{\Delta})+O(\log^*n)$-round $(\Delta+1)$-coloring algorithms exist for general graphs. At least based on the current techniques, it seems that in order to understand this question, it might be necessary to better understand the complexity of computing defective and list defective colorings in general graphs. It is currently not even known if a $d$-defective coloring with $o((\Delta/d)^2)$ colors can be computed in $f(\Delta)\cdot \log^* n$ rounds for any function $f(\Delta)$.

\bibliographystyle{alpha}
\bibliography{references.bib}
\end{document}